\algnewcommand{\Inputs}[1]{%
	\State \textbf{Inputs:}
	\Statex \hspace*{\algorithmicindent}\parbox[t]{.8\linewidth}{\raggedright #1}
}
\algnewcommand{\Output}[1]{%
	\State \textbf{Output:}
	\Statex \hspace*{\algorithmicindent}\parbox[t]{.8\linewidth}{\raggedright #1}
}
\algnewcommand{\Initialize}[1]{%
	\State \textbf{Initialize:}
	\Statex \hspace*{\algorithmicindent}\parbox[t]{.8\linewidth}{\raggedright #1}
}
\DeclareMathAlphabet{\mathitbf}{OML}{cmm}{b}{it}
\DeclareMathAlphabet{\mathbfit}{OML}{cmm}{b}{it}
\definecolor{mygreen}{rgb}{0,0.6,0}
\definecolor{myred}{rgb}{1,0.333,0.333}
\def\Tau{{\rm T}}
\theoremstyle{definition}
\newtheorem{thm}{Theorem}
\newtheorem{cor}{Corollary}
\newtheorem{lem}{Lemma}
\newtheorem{define}{Definition}
\newcommand\autorefeq[1]{\hyperref[#1]{Equation~\eqref{#1}}}%
\newcommand\autorefapp[1]{\hyperref[#1]{Appendix~\ref{#1}}}%
\newcommand\autorefalg[1]{\hyperref[#1]{Algorithm~\ref{#1}}}%
\newcommand\autorefcor[1]{\hyperref[#1]{Corollary~\ref{#1}}}%
\newcommand\autorefprop[1]{\hyperref[#1]{Proposition~\ref{#1}}}%
\newcommand\autorefproperty[1]{\hyperref[#1]{Property~\ref{#1}}}%
\newsavebox\CBox
\newlength\CLength
\def\Circled#1{\sbox\CBox{#1}%
  \ifdim\wd\CBox>\ht\CBox \CLength=\wd\CBox\else\CLength=\ht\CBox\fi
    \makebox[1.2\CLength]{\makebox(0,1.2\CLength){\put(0,0){\circle{1.4\CLength}}}%
    \makebox(0,1.2\CLength){\put(-.5\wd\CBox,0){#1}}}}
\newif\ifanonymous
\begin{document}

\title{Multi-Purpose Aerial Drones for Network Coverage and Package Delivery}

\ifanonymous
\author{}
\institute{}
\else
\author{Mohammadjavad Khosravi,
	    Hamid Saeedi, 
        Hossein Pishro-Nik 
\IEEEcompsocitemizethanks{
	\IEEEcompsocthanksitem M. Khosravi and H. Pishro-Nik are with the Department of Electrical and Computer Engineering, University of Massachusetts, Amherst, MA,01003 USA\protect, E-mail: mkhosravi@umass.edu, pishro@engin.umass.edu.\protect
	\IEEEcompsocthanksitem H. Saeedi is with the School of Electrical and Computer Engineering, Tarbiat Modares University, Tehran, Iran, E-mail: hsaeedi@modares.ac.ir}}

\fi
\renewcommand\footnotemark{}

\maketitle


\vspace{-2cm}
\begin{abstract}
Unmanned aerial vehicles (UAVs) have become important in many applications including last-mile deliveries, surveillance and monitoring, and wireless networks. This paper aims to design UAV trajectories that simultaneously perform multiple tasks. We aim to design UAV trajectories that minimize package delivery time, and at the same time provide uniform coverage over a neighborhood area which is needed for applications such as network coverage or surveillance. We first consider multi-task UAVs for a simplified scenario where the neighborhood area is a circular region with the post office located at its center and the houses are assumed to be uniformly distributed on the circle boundary. We propose a trajectory process such that if according to which the drones move, a uniform coverage can be achieved while the delivery efficiency is still preserved. We then  consider a more practical scenario in which the delivery destinations are arbitrarily distributed in an arbitrarily-shaped region. We also do not assume any restrictions on the package arrivals. We show that simultaneous uniform coverage and efficient package delivery is possible for such realistic scenarios. This is shown using both rigorous analysis as well as simulations.

\end{abstract}

\vspace{-1cm}
\begin{IEEEkeywords}
	Unmanned aerial vehicles, multi-purpose drones, package delivery, uniform network coverage.
\end{IEEEkeywords}

\IEEEpeerreviewmaketitle

\section{Introduction}\label{sec:intro}
Commercial unmanned aerial vehicles (UAVs), commonly known as drones, deployed in an unmanned aerial system (UAS), have recently drawn increased interest from private industry and academia, owing to their autonomy, flexibility, and broad range of application domains. With the on-going miniaturization of sensors and processors and ubiquitous wireless connectivity, drones are finding many new uses in enhancing our way of life. Applications of UAV technology exist in agriculture \cite{barrientos2011aerial}, surveying land or infrastructure \cite{babel2017curvature,avellar2015multi,lin2014hierarchical}, security \cite{businessinsider,WRAL,usatoday,Savuran2016EfficientRP,wang2019development}, cinematography \cite{otto2018optimization}, health care \cite{Zipline,futurism,unmannedaerial} and emergency operations \cite{adams2011survey,Winn2014AnalysisOT,Nedjati2016CompleteCP,Raap2017TrajectoryOU}.



An important emerging application of drones is on-demand delivery of goods and services which is shown to be cost-competitive relative to traditional ground-based delivery methods  \cite{Scott2017DroneDM,otto2018optimization,Murray2015TheFS,Agatz2018OptimizationAF,Wang2017TheVR,tatham2017flying,nedjati2016post,Networksextended,rabta2018drone,chowdhury2017drones,DAndrea2014GuestEC,welch2015cost}. The drones can provide on-demand, inexpensive, and convenient access to the goods and items already in or near an urban area, including consumer goods, fast-food, medicine, and even on-demand groceries. In the design and scheduling of on-demand delivery application, the goal usually is to minimize the overall delivery time/distance \cite{ulmer2018same,otto2018optimization,Murray2015TheFS,Ha2018OnTM}. To this end, we can consider the delivery efficiency as the ratio of the actual distance traveled by the drones to the minimum feasible distance that needs to be traveled to take care of a set of package delivery jobs. The notion of efficiency will be made precise in Section \ref{sec:Ideal_case}.

Another important application of drones is their deployment in communications and surveillance \cite{motlagh2016low,ono2016wireless,mozaffari2016unmanned,lyu2016cyclical,chetlur2017downlink,zeng2016throughput,sun2019cooperative,mozaffari2015drone,lyu2016placement,kalantari2016number,mozaffari2016efficient,mozaffari2017mobile,wu2017joint,lyu2018uav,wu2018joint,lyu2016cyclical,chetlur2017downlink,zhang2019spectrum,enayati2019moving,Zuckerberg2014ConnectingTW,Sharma2016UAVAssistedHN,liu2019comp,bor2018spatial}. In the former case, the drones are also referred to as aerial base stations (ABS)
 \cite{chandrasekharan2016designing}. In many cases, the ABS's are assumed to be moving along some pre-designed trajectories \cite{enayati2019moving,mozaffari2015drone,mozaffari2016unmanned}. The latter case, referred to as surveillance drones (SD), is usually associated with the drones that can carry video cameras and transmit video to provide new perspectives in visual surveillance \cite{bonetto2015privacy}. Although these two applications may seem fundamentally different, they share a common requirement: they usually have to fly along trajectories so as to provide a relatively uniform coverage over the area on which they operate. Throughout this paper, such applications are referred to as uniform-coverage applications (UCA).

Since drones can be used in many applications, an interesting idea is to design UAS's that simultaneously perform multiple tasks. This could significantly improve the efficiency of such systems. In this paper, we aim to systematically investigate this idea for the first time. As a first step, we consider a residential region where drones are used as the last-mile delivery tools within the area. Since these drones are already flying all over the area and providing some kind of aerial coverage, we may want  to use them in a UCA framework.
If this is the case, an important question would be whether the same mobility patterns can provide a uniform coverage in the area of interest. Alternatively, if we modify the patterns to achieve a uniform coverage, do we necessarily have to lose anything in terms of delivery efficiency?

To get an insight into the proposed question, consider the 780-acre University of Massachusetts (UMASS) campus that contains about 170 buildings (Figure~\ref{fig:umassmap}) in which we assume that the last-mile delivery office is  located in the lower-left corner of the figure with 10 operating drones. The drones start flying in straight lines with constant velocity to deliver the package to the building of interest and fly back to the post office. It is not difficult to see that this is the most efficient delivery profile\footnote{It is easy to see that invoking any practical limitation such as safety considerations can only increase the travel distance. Moreover, it is worth noting that the details are not consequential here as the point being made is that normal operation of drones in straight-lines normally creates non-uniform coverage.}. We refer to this delivery algorithm as the "benchmark algorithm" throughout this paper. Now we investigate the coverage associated to this mobility pattern. To do so, we divide the maps into small regions and find the average number of drones on that region at an arbitrary time instant through a simulation setup. The results have been shown on a heat map in Figure~\ref{fig:heatmapumass}.

\begin{figure}[htbp]%
	\centering
	\subfloat[University of Massachusetts (UMASS) campus]{
		\includegraphics[width=0.45\columnwidth]{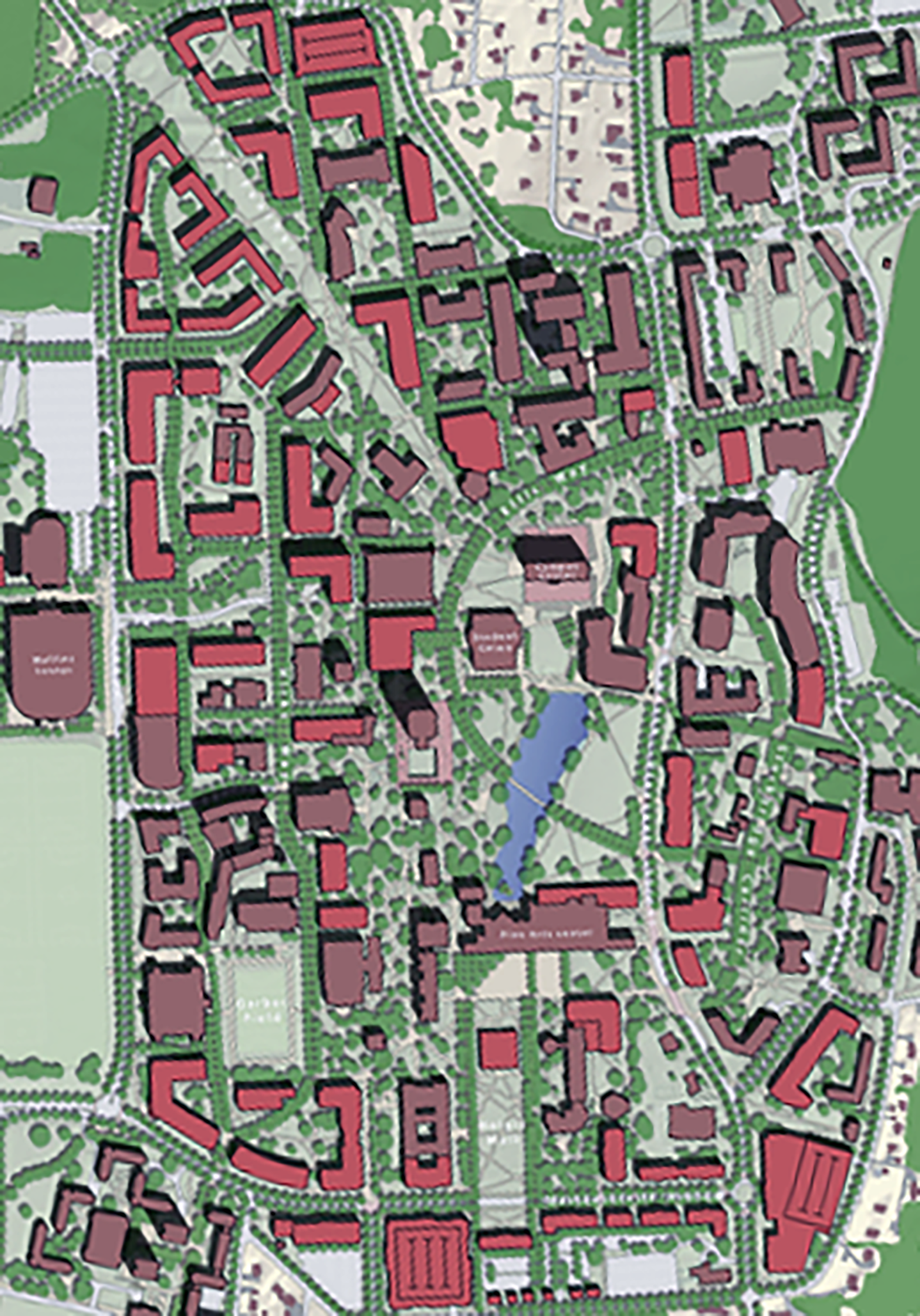}\label{fig:umassmap}}\hfill
	\subfloat[Heat-map of average number of drones for the \textit{fixed-speed-direct-line} algorithm]{
		\includegraphics[width=0.5\columnwidth]{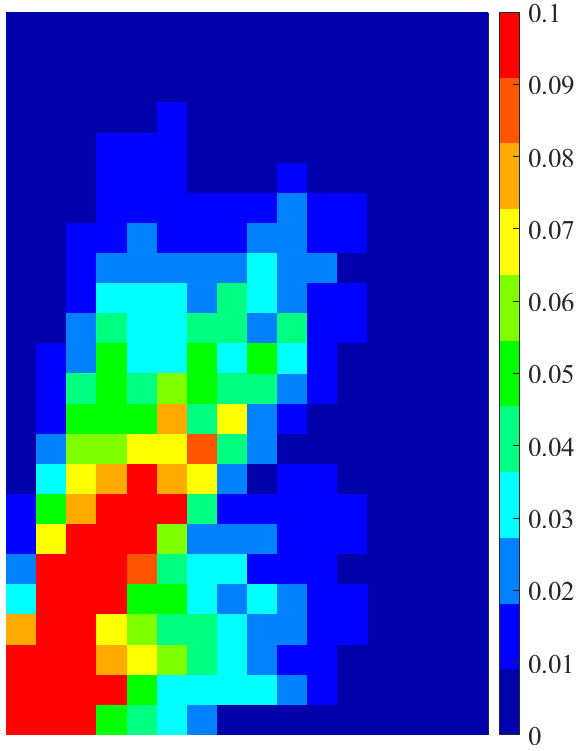}\label{fig:heatmapumass}}
	\caption{Multi-purpose drone algorithm for a residential area}%
	\label{fig:umassbenchmark}%
\end{figure}

As can be seen, the coverage is quite far from uniform which suggests that the idea of multi-purpose UAS may not actually work. Surprisingly, we will demonstrate that this is not the case. In this paper, we design efficient drone delivery systems that can simultaneously provide a fairly uniform coverage. This is achieved through designing mobility trajectories on which the drones move with variable speeds. We first consider a simplified scenario where we assume a circular region with the post office located at its center (referred to as the ideal case). The houses are assumed to be uniformly distributed on the circle boundary. Assuming the package arrivals are also uniform, we propose a trajectory process such that if according to which the drones move, a uniform coverage can be achieved while the delivery efficiency tends to 1. Next, we consider a more practical scenario in which the delivery destinations are arbitrarily distributed in an arbitrarily-shaped region. We also do not assume any restrictions on the package arrivals. In this case, we also show that simultaneous uniform coverage and efficient package delivery is practically possible.

The rest of the paper is organized as follows: in Section \ref{sec:Preliminaries}, we provide some definitions and discussions that are needed throughout the paper. In Section \ref{sec:Ideal_case}, we introduce our system model, scenario, our proposed algorithm for the ideal case (simplified scenario) and analytically prove the uniformity of the coverage and the efficiency of package delivery of our proposed algorithm. In Section \ref{sec:Practical_case}, we present the practical scenario, and after describing
the steps of our proposed algorithm, we prove the coverage uniformity. Section \ref{sec:conclude}, provides the simulation results, and Section V concludes our work.





\vspace{-5mm}
\section{Preliminaries}
\label{sec:Preliminaries}

\subsection{Binomial Point Processes}

If a fixed number of points are independently and identically distributed (i.i.d.) on a compact set $W \in R^d$, we say that these points can be modeled by general binomial point process (BPP) \cite{haenggi2012stochastic}. If these points are distributed uniformly within the same compact set, then we say the points are distributed according to a uniform BPP.

\subsection{Uniform Coverage}
We first need to clarify what we exactly mean by a uniform coverage. Uniform coverage can be considered from two perspectives: one is related to ensemble averages, and the other is related to time averages as discussed below.

In \cite{enayati2019moving}, authors obtain trajectories for UCAs according to ensemble averages. Specifically, they aim at designing trajectory processes for which, at any time snapshot, the locations of drones are distributed according to a uniform BPP process over the neighborhood area. This means that at any time $t$, the locations of drones are uniform and i.i.d. across the region. Here, the average is a taken over any sources of randomness in the scenario.

The other perspective is to look at the time averages. Roughly speaking, if we  divide the intended area to small equal cells, we can look at the percentage of the time each cell is covered over time and require that all the cells are covered equally over a long period of time.

Depending on the application, one of the above definitions might be more useful. Nevertheless, as it turns out, under mild conditions the trajectory processes can be made ergodic in the sense that both conditions can be satisfied simultaneously \cite{enayati2019moving}. In this paper, we consider the first definition (ensemble average view) for the ideal case in Section \ref{sec:Ideal_case}. This is because in that section, we make specific assumptions for probability distributions. On the other hand, in Section \ref{sec:Practical_case}, since we do not want to make any assumptions about probability distributions, we follow the second definition.

\subsection{Efficiency of package delivery}
Here, we make the notion of package delivery efficiency precisely.  Let ${\cal A (\mathrm{C})}$ be the set of all possible delivery algorithms satisfying the set of conditions and requirements $\mathrm{C}$. For example, for a given geometry, we could require that the algorithms are able to deliver $m$ arriving packages using $D$ drones with the average velocity $V_{avg}$ assuming each drone can carry only one package at a time. Since there is uncertainty and randomness in the operation (for example, the package destinations are not predetermined, and could follow a known or unknown statistical distribution), we need to consider a probabilistic view. More specifically, let the underlying probability space be represented as $(\Omega, \mathcal{F}, P)$. This probability space captures all non-deterministic aspect of the problem.

Consider an Algorithm $A \in \cal A (\mathrm{C})$. Let $\Tau_m(A)$ indicate the expected value of the time to deliver $m$ packages using Algorithm $A$, where the expectation is taken over the probability space $(\Omega, \mathcal{F}, P)$. Define $T_m^*$ as
$ T_m^*= \inf \{T_m(A): A \in \cal A (\mathrm{C})\}. $
Intuitively, $T_m^*$ provides the smallest average delivery time possible in a setting. This gives us a means to define package delivery efficiency for an any Algorithm $A \in \cal A (\mathrm{C})$.

\begin{define}
 Consider a set of delivery algorithms ${\cal A (\mathrm{C})}$ satisfying the set of conditions and requirements $\mathrm{C}$. We define the efficiency of the package delivery for an Algorithm $A \in \cal A (\mathrm{C})$ as follows
 \vspace{-5mm}
 \begin{equation}
\eta = \frac{T_m^*}{T_m(A)},      0 \leq \eta \leq 1
\end{equation}
\end{define}
If $\eta$ is close to $1$, it means that the algorithm is more efficient.

\section{Ideal case}
\label{sec:Ideal_case}

Here, we first explain the system model and scenario for ideal case. Next, we propose our algorithm which delivers the packages and provides the uniform coverage over the regions.

\vspace{-5mm}

\subsection{System model}
 Figure \ref{fig:Idealcasesneighborhood_area} shows the neighborhood area over which we want to provide the uniform coverage. We assume that $D$ drones deliver the arriving packages from the post office (at the center of region) to the $N$ destination houses and at the same time, they are used for a UCA. There are $N$ houses in the neighborhood area, which are destinations of the arrival packages. The houses are uniformly and independently distributed at the boundary of the circular region.  We assume packages are continuously arriving at the post office center. In other words, it is assumed that there are always packages in the post office to be delivered by the drones. Let $X_1, X_2, X_3, ...$ be the sequence of random variables that correspond to the sequence of incoming packages. More specifically, we say that the $i^{th}$ package must be delivered to the $k^{th}$ house, if $X_i= k$, where $k \in \{1, 2, ..., N\}$.

 \begin{figure}[htbp]%
 	\centering 	
 	\includegraphics[width=0.5\columnwidth]{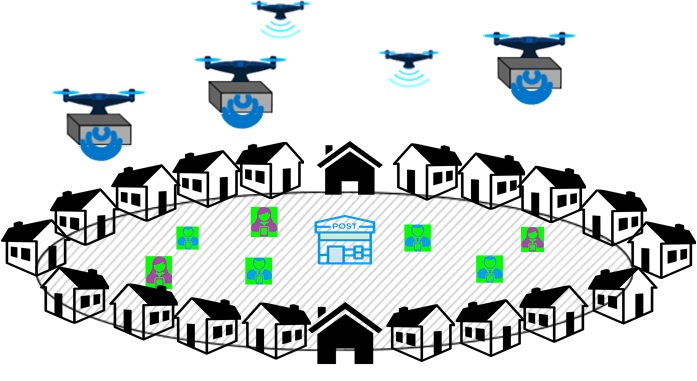} 	
 	\caption{Neighborhood area for Ideal case}
 	\label{fig:Idealcasesneighborhood_area} 	
 \end{figure}

To compare efficiency of different algorithms fairly, we assume that all the drones fly with the average velocity, i.e., $V_{avg}$. The average is computed over the running time of the delivery algorithm. The time needed for one drone to reach the neighborhood edge from the post office in a straight line by average velocity $V_{avg}$ is denoted by $\tau$, i.e., $\tau = \frac{\rho-\gamma}{V_{avg}}$ where $\gamma$ is the radius of the post office center, and $\rho$ is the radius of the entire neighborhood area. For simplicity, throughout the paper, we ignore the down times (i.e. as nights) and remove them from our analysis.
\vspace{-5mm}
\subsection{The Scenario for Ideal Case}
We assume that $D$ drones deliver the arriving packages from the post office in a circular neighborhood area. Figure \ref{fig:modelparam} shows the parameters of this scenario.  $\theta_i$ ($0 \leq \theta_i \leq \theta_{max}$) is the angle of the $i^{th}$ house on the perimeter of the circle sector. In case of  a full circle, $\theta_{max}$ is equal to $2\pi$ as  in Fig. \ref{fig:Idealcasesneighborhood_area}. The whole neighborhood area $A$ is defined as in (\ref{eq:targetarea}). We assume houses are distributed uniformly over the neighborhood edge. We also assume package destinations are uniformly distributed over $1, 2, ..., N$.

\begin{figure}[htbp]%
	\centering
	\includegraphics[width=0.5\columnwidth]{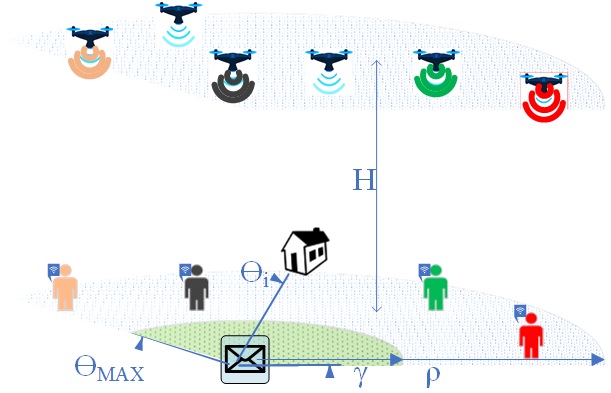}%
	\caption{Parameters of our system model}%
	\label{fig:modelparam}%
\end{figure}
\vspace{-7mm}

\begin{equation} \label{eq:targetarea}
A= \{(r, \theta): \gamma \leq r \leq \rho; 0 \leq \theta \leq \theta_{max} \}
\end{equation}

\subsection{Lower bound for $T_m^*$}
Here, we obtain a lower bound for $T_m^*$ for the ideal case.
\begin{lem} \label{lem:lem-low-Tau}
	In Ideal case, we have $ T_m^* \geq \frac{2m\tau}{D}$,
where $\tau$ and $D$ are defined above.
\end{lem}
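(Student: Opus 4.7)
The plan is to prove the bound by a simple averaging (total-work) argument based on the assumption that each drone flies at average velocity $V_{avg}$ and carries only one package at a time. The key intuition is that every package incurs a mandatory round trip between the post office and a destination on the circle boundary, and all of these round trips must be scheduled on only $D$ drones.

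Concretely, I would fix an arbitrary algorithm $A \in \mathcal{A}(\mathrm{C})$ and consider its execution over the time interval $[0, T_m(A)]$. The first ingredient is a global distance budget: since each drone's average speed over the running time equals $V_{avg}$, the total path length traversed by all $D$ drones during $[0, T_m(A)]$ is exactly $D \, V_{avg} \, T_m(A)$. The second ingredient is a lower bound on the total path length the algorithm must produce. Because houses lie on the outer boundary at radius $\rho$ and packages must be picked up at the post-office disc of radius $\gamma$, any single delivery forces the corresponding drone to traverse a segment of length at least $\rho - \gamma$ from the post office to the destination; since a drone can carry only one package at a time, in order to deliver a subsequent package it must first return to the post office, contributing another segment of length at least $\rho - \gamma$. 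Thus, grouping the trajectory of each drone into ``out'' and ``back'' segments, delivering $m$ packages requires $m$ outbound segments and (at least) $m-D$ return segments, giving a total traversed distance of at least $(2m-D)(\rho-\gamma)$, which in the continuous-arrival regime of the ideal scenario can be tightened to $2m(\rho-\gamma)$ by accounting for the return trip that every drone still has to make to stay in service for the next package.

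Combining the two ingredients yields
\begin{equation*}
D \, V_{avg} \, T_m(A) \;\geq\; 2m(\rho - \gamma),
\end{equation*}
and dividing both sides by $D\, V_{avg}$ and using the definition $\tau = (\rho-\gamma)/V_{avg}$ gives $T_m(A) \geq 2m\tau/D$. Since $A$ was arbitrary, the same bound holds for $T_m^* = \inf_A T_m(A)$, proving the lemma.

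The only subtle step is the lower bound on total traveled distance, specifically the handling of the final return trip: a naive pigeon-holing gives $(2m-D)(\rho-\gamma)$ rather than $2m(\rho-\gamma)$, which is weaker by an additive $\tau$. I would resolve this by appealing to the continuous-arrival assumption stated in the system model (packages are always present at the post office), so that each delivery must be followed by a return to the post office in order for the drone to remain productive; equivalently, one can amortize over the ongoing operation so that every package ``pays for'' a full round trip. Beyond this point, the argument is purely deterministic and requires no appeal to the probability space $(\Omega, \mathcal{F}, P)$, since we are bounding a per-sample travel distance and then taking expectation.
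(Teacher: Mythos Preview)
Your proposal is correct and follows essentially the same total-distance (work-budget) argument as the paper: bound the aggregate path length required for $m$ round trips from below by $2m(\rho-\gamma)$, equate aggregate path length with $D\,V_{avg}\,T_m(A)$ via the average-speed constraint, and divide. The only difference is cosmetic---the paper first treats the single-drone case and then divides by $D$, whereas you work directly with the $D$-drone budget---and you are in fact more explicit than the paper about the return-trip subtlety, which the paper handles by simply declaring $d_i \geq 2(\rho-\gamma)$ per package.
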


\begin{proof}
    Let's first assume there is only one drone. For delivering any of the packages, the drone must travel a distance $d_i \geq 2(\rho-\gamma)$. Let $t_i$ be the time devoted to the delivery of the $i$th package. Then, the total time for delivery of $m$ packages will be at least $\sum_{i=1}^{m} t_i$ and the total distance traveled is $\sum_{i=1}^{m} d_i$. By assumption, the average speed is $V_{avg}$, therefore
    \[\sum_{i=1}^{m}t_i=\frac{\sum_{i=1}^{m} d_i}{V_{avg}} \geq \frac{2m(\rho-\gamma)}{V_{avg}}=2m \tau. \]

	Now, if there are $D$ drones, for simultaneously delivering $m$ packages, a minimum time of $\frac{2\tau m}{D}$ is necessary. Since this is true for all  $A \in \cal A (\mathrm{C})$, we conclude
\[ T_m^* \geq \frac{2m\tau}{D}.\]
\end{proof}


\subsection{The Algorithm}
\begin{algorithm}
	\small
	\begin{algorithmic}[1]
		\Function{DeliveryCost}{$D, m, X$}		
		\Inputs{
			D drones with average speed \textit{V}\\
			$m$ number of package to be delivered\\
			$X$ arrival packages which are distributed over $1, 2, ..., N$}
		\vspace{1cm}
		\Output{Total time to deliver $m$ packages ($T_m$)}
		\For{\texttt{<i=1; j<=$D$>}}
		\State Generate random variable $T_i$ uniform over $(0, \tau)$.
		\State Assign $i^{th}$ package to $i^{th}$ drone
		\State $i^{th}$ drone flies at $T_i$ over a straight line with $V_d(t)$ at angle $\theta_i$
		\EndFor
		\State $j=D+1$;
		\While{\texttt{j<=m>}}		
		\State Assign $j^{th}$ package to a free drone (say $i^{th}$ drone)
		\State $i^{th}$ drone flies right away over a straight line with $V_d(t)$ at angle $\theta_j$				
		\EndWhile
		\EndFunction
	\end{algorithmic}
	\caption{Algorithm corresponding to the ideal case}
	\label{alg:case1}	
\end{algorithm}
Here, we propose a multipurpose algorithm for the ideal case, i.e., an algorithm that can be used both for delivery of packages as well as uniform coverage. The simplifying assumptions of the ideal case makes the design of such algorithms very easy for this case. In fact, the main idea comes from properly randomizing the initial take-off times of the drones as well as properly choosing varying speeds for drones during delivery. In the proposed algorithm, referred to as Algorithm 1, first, we choose the take off times of drones, $T_1, T_2, ..., T_D$, independently and uniformly from $(0, \tau)$. A package $X_i = k (1 \leq k \leq N, i = 1, 2, ...)$ is assigned to a free drone to be delivered. Each drone first flies to a predetermined altitude of $H$, then flies in a straight line with angle $\theta_k$ (the direction of the destination) towards the neighborhood edge. 
When the drone reaches the neighborhood edge and delivers its assigned package, it returns to the origin on the same angle to complete the first cycle and this action repeats continuously. Figure~\ref{fig:idealcase} shows this trajectory process.

\begin{figure}[htbp]%
	\centering
	\includegraphics[width=0.5\columnwidth]{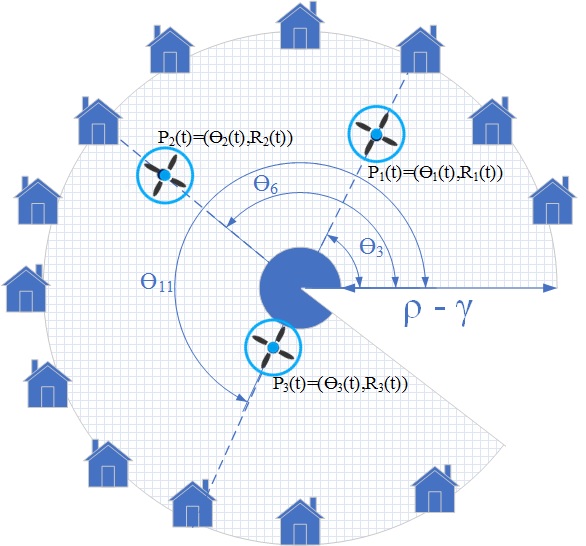}%
	\caption{First process trajectory}%
	\label{fig:idealcase}%
\end{figure}

The speed of drone $d$ at time $t$ is given by
\begin{equation} \label{eq:speedcase1}
V_{d}(t)=\begin{cases}
\frac{(\rho^2 -\gamma^2)}{2\sqrt{\tau((\rho^2 -\gamma^2)(t-k \tau-T_d)+\tau\gamma^2)}}, & \text{if $T_d + k \tau \leq t \leq T_d+(k+1) \tau$, k even}.\\
\frac{-(\rho^2 -\gamma^2)}{2\sqrt{\tau((\rho^2 -\gamma^2)((k+1)\tau+T_d-t)+\tau\gamma^2})}, & \text{if $T_d + k \tau \leq t \leq T_d+(k+1) \tau$, k odd}.
\end{cases}
\end{equation}

We prove that if $d_{th}$ drone flies with speed $V_d(t)$ at time $t$ given by (\ref{eq:speedcase1}), the drones will provide a uniform coverage over the area $A$. Equation (\ref{eq:speedcase1}) suggests that drones fly faster close to post office and decrease their speed near the boundary (i.e., near the houses) to provide a uniform coverage. Furthermore, the location of the drone is obtained by taking integral of (\ref{eq:speedcase1}) as in (\ref{eq:loccase1}).

\vspace{-5mm}

\begin{equation} \label{eq:loccase1}
R_{d}(t)=\begin{cases}
\sqrt{\frac{(\rho^2 -\gamma^2)(t-k \tau- T_d)}{\tau}+\gamma^2}, & \text{if $T_d + k \tau \leq t \leq T_d+(k+1) \tau$, k even}.\\
\sqrt{\frac{(\rho^2 -\gamma^2)((k+1)\tau+T_d-t)}{\tau}+\gamma^2}, & \text{if $T_d + k \tau \leq t \leq T_d+(k+1) \tau$, k odd}.
\end{cases}
\end{equation}

\begin{thm}\label{thm:case1}
	For trajectory process corresponding to the ideal case: i) For all $t > \tau$ , the instantaneous locations of the drones along the delivery path $(\theta_{d}(t), R_{d}(t))$, form a uniform BPP, and ii) the time to deliver $m$ packages is equal or less than $\frac{2m\tau}{D}+\tau$, i.e., $\Tau_m (A) \leq \frac{2m\tau}{D}+\tau$.
\end{thm}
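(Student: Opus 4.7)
The plan is to prove parts (i) and (ii) separately, using rather different arguments.

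For part (i), I would establish the uniform BPP claim by verifying three ingredients: the marginal distribution of $\theta_d(t)$, the marginal distribution of $R_d(t)$, and independence across drones. For the angle, because the houses are i.i.d.\ uniform on the boundary of the sector and the package destinations are i.i.d.\ uniform over $\{1,\dots,N\}$, the angle $\theta_d(t)$ of the drone currently heading to (or returning from) its target is itself uniform on $(0,\theta_{max})$. For the radius, the key observation is that the speed profile in (\ref{eq:speedcase1}) is engineered so that $R_d(t)^2$ is a linear function of $t$ on each half-cycle, as is transparent from (\ref{eq:loccase1}). Combined with the randomization of the take-off time $T_d$, which randomizes the phase $(t-T_d)\bmod 2\tau$, a short change-of-variables shows that $R_d(t)^2$ is uniform on $[\gamma^2,\rho^2]$, giving the density $f_{R_d}(r)=2r/(\rho^2-\gamma^2)$. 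Multiplying the two marginals yields $r/\mathrm{area}(A)$, which is exactly the density of a uniform point on $A$ in polar coordinates. Independence across drones is then immediate since the $T_d$'s and the destination sequences are independent across drones.

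For part (ii), the plan is a direct scheduling argument. Because each round trip takes exactly $2\tau$ and packages are always available, drone $d$ delivers its $\ell$-th package at time $T_d+(2\ell-1)\tau$. Writing $m=kD+r$ with $r\in\{1,\dots,D\}$ and $k=\lfloor (m-1)/D\rfloor$, the first-come-first-served rule assigns the $r$ ``extra'' packages of round $k+1$ to the $r$ drones with the smallest take-off times, so the $m$-th delivery occurs at $T_m = T_{(r)} + (2k+1)\tau$, where $T_{(r)}$ denotes the $r$-th order statistic of $T_1,\dots,T_D$. Taking expectations and using $E[T_{(r)}]=r\tau/(D+1)$ for $T_d\sim U(0,\tau)$ i.i.d., the expected delivery time is $r\tau/(D+1)+(2k+1)\tau$, and the elementary bound $r/(D+1)\le 2r/D$ immediately gives $T_m(A)\le 2m\tau/D+\tau$.

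The main obstacle I expect is the marginal calculation for $R_d(t)$ in part (i). At a generic $t>\tau$, the phase $(t-T_d)\bmod 2\tau$ is supported on an interval of length $\tau$ that can straddle $\tau$, the boundary between the outbound half-cycle (even $k$ in (\ref{eq:loccase1})) and the return half-cycle (odd $k$). Thus one must split the computation into the ``outgoing'' and ``incoming'' portions and verify that the two pieces combine to give the uniform law on $[\gamma^2,\rho^2]$. The calculation works cleanly because $R_d(t)^2$ depends on the phase only through $\min(\phi,2\tau-\phi)$ and because the common slope $(\rho^2-\gamma^2)/\tau$ on both half-cycles makes the change-of-variables densities add in the right way. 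Once this marginal is in hand, the remaining parts of (i) are routine and (ii) is a short algebraic check.
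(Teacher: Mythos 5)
Your proposal is correct and follows the same overall structure as the paper: part (i) is proved by showing the radial marginal is $f_{R_d}(r)=2r/(\rho^2-\gamma^2)$ via a change of variables on the uniform take-off time $T_d$ (this is exactly the content of Lemma~\ref{lem:CDF_uniformity}), combined with the uniform angle and independence across drones; part (ii) is a round-based scheduling argument. Two of your refinements are worth noting. First, in part (i) you explicitly handle the fact that for a fixed $t$ the phase $(t-T_d)\bmod 2\tau$ can straddle the outbound/return boundary, so the density of $R_d(t)^2$ is assembled from two pieces whose contributions add correctly because the two half-cycles have slopes of equal magnitude $(\rho^2-\gamma^2)/\tau$; the paper's lemma instead fixes a parity of $k$, computes that conditional case, and asserts the other case is "similar," which quietly skips this mixing issue. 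Second, in part (ii) you compute the exact expected completion time $E[T_{(r)}]+(2k+1)\tau = r\tau/(D+1)+(2k+1)\tau$ using the order statistics of the take-off times and then verify $r/(D+1)\le 2r/D$; the paper argues more coarsely that all $D$ drones have returned by $3\tau$ and the remaining $m-D$ packages take $2\tau(m-D)/D$, which is only exact when $D$ divides $m-D$ and otherwise needs precisely the kind of last-round accounting you supply. Your version is tighter and closes a small gap in the case $D\nmid m$; the paper's version is shorter and suffices for the stated bound when read charitably.
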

Before providing the proof, we present the following lemma which will be used later in the proof procedure.
\begin{lem} \label{lem:CDF_uniformity}
	For any arbitrary observation time of $t > \tau$ , the location of any of the $D$ drones that move according to (\ref{eq:speedcase1}) has the following probability density function (pdf):
\[f_{R_d}(r_d)= \frac{2r_d}{\rho^2-\gamma^2}, \ \ \gamma \leq r_d \leq \rho. \]
That is, $f_{R_d}(r_d)$ is the pdf of distance of a uniformly distributed point in the circular region between radii $\gamma$ and $\rho$.
\end{lem}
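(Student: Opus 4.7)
The plan is to derive the CDF $F_{R_d}(r_d) = P(R_d(t)\leq r_d)$ directly from the explicit position formula (\ref{eq:loccase1}), exploiting that the take-off time $T_d$, uniform on $(0,\tau)$, is the sole source of randomness for the drone's radial coordinate; the PDF then follows by differentiation.

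First I would invert (\ref{eq:loccase1}) to convert the event $\{R_d(t)\leq r_d\}$ into a condition on $T_d$. Defining the shorthand $\alpha \equiv \frac{(r_d^2-\gamma^2)\tau}{\rho^2-\gamma^2}\in[0,\tau]$, in the outward branch ($k$ even) the event reads $T_d\geq t-k\tau-\alpha$, and in the inward branch ($k$ odd) it reads $T_d\leq t-(k+1)\tau+\alpha$. Combined with the branch-defining constraint $t-(k+1)\tau\leq T_d\leq t-k\tau$, this localizes the event to a union of sub-intervals of $(0,\tau)$ whose total length I would then compute. To handle the bookkeeping cleanly, I would change variables to $U = t - T_d$, under which $R_d(t)$ becomes a deterministic periodic function of $U$ of period $2\tau$, strictly monotone on $(0,\tau)$ and on $(\tau,2\tau)$; the pre-image $\{R_d\leq r_d\}$ within one period is therefore $(0,\alpha)\cup(2\tau-\alpha,2\tau)$, of total Lebesgue measure $2\alpha$.

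The key computation is then to show that the uniform law of $T_d$ on $(0,\tau)$ assigns this pre-image mass $\alpha/\tau$, giving $F_{R_d}(r_d) = \alpha/\tau = (r_d^2-\gamma^2)/(\rho^2-\gamma^2)$. Differentiating in $r_d$ yields the claimed density $f_{R_d}(r_d) = 2r_d/(\rho^2-\gamma^2)$, which one recognizes as the density of the distance-to-origin of a point drawn uniformly from the annulus $\{\gamma\leq r\leq\rho\}$, establishing the second half of the statement.

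The main obstacle will be the careful enumeration of which branch index $k$ applies for each $T_d$ and each observation time $t>\tau$, since the sub-interval decomposition depends on the residue $t\bmod 2\tau$. A cleaner and more illuminating shortcut is the occupation-time viewpoint: from (\ref{eq:speedcase1}) one directly reads $|V_d(t)| = (\rho^2-\gamma^2)/(2\tau R_d)$, so the time per cycle that the drone spends near radius $r$ is proportional to $1/|V_d|\propto r$. Normalizing over $(\gamma,\rho)$ immediately forces $f_{R_d}(r_d) = 2r_d/(\rho^2-\gamma^2)$, matching the claim; I would use this as the conceptual backbone supporting the rigorous branch-by-branch calculation outlined above.
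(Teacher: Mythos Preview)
Your approach matches the paper's: invert (\ref{eq:loccase1}) to turn $\{R_d(t)\le r_d\}$ into a linear inequality on the uniform take-off time $T_d$, then differentiate the resulting CDF. The paper carries out only the odd-$k$ branch explicitly and asserts the even case is analogous, glossing over the branch-index bookkeeping you rightly flag as the main subtlety; your change of variable $U=t-T_d$ and the occupation-time heuristic $f_{R_d}\propto 1/|V_d|\propto r$ are clean additions not present in the paper's argument.
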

\begin{proof}
	First, assume that $T_d+k\tau \leq t \leq T_d+(k+1)\tau$ and $k$ is odd, we have the following:
	
	\begin{equation}\label{eq:CDF_Td}
	\begin{aligned}
	F_{R_d}(r_d) = Pr(R_d(t) \leq r_d) = Pr( \sqrt{\frac{(\rho^2 -\gamma^2)((k+1)\tau+T_d-t)}{\tau}+\gamma^2} \leq r_d)\\= Pr(T_d \leq \frac{\tau(r_d^2-\gamma^2)}{\rho^2-\gamma^2}-(k+1)\tau+t)=Pr(T_d \leq \omega_d)= F_{T_d}(\omega_d),
	\end{aligned}
	\end{equation}
	where $F_{T_d}$ is the CDF of $T_d$ and $\omega_d=\frac{\tau(r_d^2-\gamma^2)}{\rho^2-\gamma^2}-(k+1)\tau+t$.\\
	Now to obtain the PDF of the $R_d$, we take the derivative of $F_{R_d}$:
	
	\begin{equation}\label{eq:PDF_uniformity}
	\begin{aligned}
	f_{R_d}(r_d)=\frac{dF_{R_d}(r_d)}{dr_d}=\frac{dF_{T_d}(\omega_d)}{dr_d}	=\frac{d}{dr_d}(\frac{r_d^2-\gamma^2}{\rho^2-\gamma^2}-(k+1)+\frac{t}{\tau})=\frac{2r_d}{\rho^2-\gamma^2}, \ \ \gamma \leq r_d \leq \rho
	\end{aligned}
	\end{equation}
	where (\ref{eq:PDF_uniformity}) is obtained from the fact that $T_d\sim U(0,\tau)$.  The case for even $k$ is proved similarly.
\end{proof}

We now provide the proof for Theorem \ref{thm:case1}.
\begin{proof}
	To prove the first part of Theorem \ref{thm:case1}, we first need to show that for $t > \tau$ , the location of vehicles are independent. This is intuitive, since $\theta_d\sim U(0, \theta_{max})$ and $T_d\sim U(0,\tau)$ both have been chosen independently. Second, we have to show that the locations are uniformly distributed over $A$. To do so, we note that since, $\theta_d\sim U(0, \theta_{max})$, the angle of the drone is uniformly distributed between $0$ and $\theta_{max}$, i.e. $\angle P_{d}(t)\sim U(0, \theta_{max})$. In addition, in Lemma \ref{lem:CDF_uniformity}, we proved that the location of drones, i.e., $R_{d}(t)$, are uniformly distributed over $A$. Therefore, drones are distributed according to uniform BPP over $A$.
	
	The proof of the second part of Theorem \ref{thm:case1} is as follows:  The departure times of the $D$ drones, $T_1, T_2, ..., T_D$, are i.i.d. and uniform over $(0, \tau)$. So by time $\tau$, all $D$ drones have departed and by time $3\tau$, they have delivered at least $D$ packages and come back to the post office center. The delivery time of the rest of packages (i.e., $m-D$ packages) is $\frac{2\tau(m-D)}{D}$, which are  simultaneously delivered by the $D$ drones. Therefore, the time to deliver $m$ packages is equal to or less than $\frac{2m\tau}{D}+\tau$.
\end{proof}

By considering the upper bound of $T_m(A)$ obtained in Theorem~\ref{thm:case1}, and the lower bound of delivery efficiency time obtained in Lemma \ref{lem:lem-low-Tau}, the efficiency of the proposed algorithm satisfies
	\begin{equation}\label{eq:efficiencyth1}
	\eta \geq \frac{1}{1+\frac{D}{2m}}.
	\end{equation}
Note that since $m$ is the number of delivered packages, the efficiency approaches $1$ over time.
\section{Practical (General) case}
\label{sec:Practical_case}

 In the general scenario, we do not want to impose specific assumptions on the density or location of homes or the distribution of arrival packages. Therefore, this setting can be applied to any neighborhood area.

\subsection{System Model and the Scenario}
 \begin{figure}[htbp]%
	\centering 	
	\includegraphics[width=0.6\columnwidth]{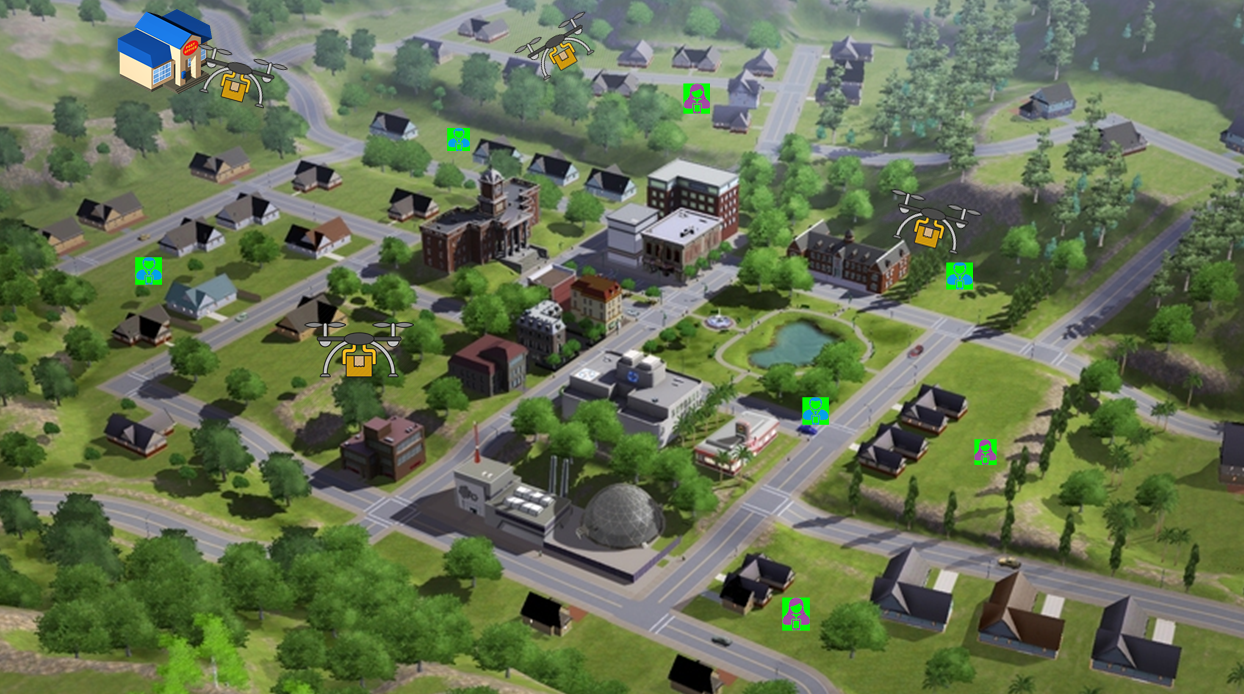} 	
	\caption{Neighborhood areas for Practical case}
	\label{fig:Practicalcasesneighborhood_area} 	
\end{figure}
Figure \ref{fig:Practicalcasesneighborhood_area} shows a typical neighborhood area over which we want to provide a uniform coverage. In this case, the geometry of neighborhood area does not need to be circular and is generally represented by a 2D shape. In addition, the houses are arbitrarily distributed in the neighborhood area, so the distances from the post office to the houses can be any arbitrary value. Again, we consider a multipurpose scenario: We assume that $D$ drones deliver the arriving packages from the post office to $N$ destination houses and at the same time we want to use them in a UCA framework. We assume packages are continuously arriving at the post office center. The only assumption we make (about the probability distributions of the destinations) is that over a period of time, each destination has non-zero probability.
%
The location of $h^{th}$ house is defined in a three-dimensional (3D) Cartesian coordinate system by $(x_h, y_h, 0)$,  where $1 \leq h \leq N$. Drones fly at a constant altitude $H$ above the ground and the location of the $d^{th}$ drone at time $t$ is shown by $(X_{d}(t), Y_{d}(t), H)$, where $1 \leq d \leq D$.

\vspace{-7mm}

\subsection{The Algorithm}
Here, we provide the detailed steps and components of the algorithm for the practical case.
\textbf{Division of the area to small cells:}
In this algorithm, referred to as Algorithm 2,  first, we divide the neighborhood area into small regions (cells). We use $A_l$ to refer to these regions where $1 \leq l \leq S$ and $S$ is the number of cells. We assume that $A_l$ is small so that at most one drone can fly over the cell at any time. This assumption is compatible with  the safety concern of drones as well.

\begin{algorithm}
	\begin{algorithmic}[1]
		\footnotesize	
		\Function{DeliveryCost}{$A, D, m, X$}		
		\Inputs{
			$A$ the area should be covered\\
			D drones with average speed \textit{V}\\
			$m$ number-of packages to deliver\\
			$X$ arrival packages which are not uniformly distributed over $1, 2, ..., N$}
		\Output{Total time to deliver $m$ packages ($T_m$)}
		\State Define $V_{MAX}$ and $V_{MIN}$
		\State Divide $A$ into small cells; called these cells $A_1, A_2, ..., A_S$
		\State for each small cells consider coverage probability $p_r\,,\, 1\,< r\,<S$ and initialize it with 0
		\For{\textit{h=1; $h<=N$}}
		\State Generate the straight trajectory between the post office and $h^{th}$ house and called it $PT_h$
		\EndFor
		\For{\textit{l=1; $l<=S$}}
		\If{No $PT$ passes through $A_l$}
		\State Select $PT_h$ which is the closest trajectory to $A_l$
		\State Change $PT_h$ in such a way that it passes  through $A_l$
		\EndIf
		\EndFor
		\For{\textit{j=1; $j<=\frac{m}{D}$}}
		\For{\textit{i=1; $i<=D$}}
		\State Assign $((j-1)*D+i)^{th}$ package to $i^{th}$ drone
		\State Assume $h$ is the destination of $((j-1)*D+i)^{th}$ package
		\State foreach region $l$ which $PT_h$ passes through		
		\If{$p_l\,<p^*$}
		\State Set velocity of $i^{th}$ drone to $MAX(V_{MIN},\frac{H_1(PT_h,A_l)}{p^*-p_l})$
		\Else
		\State Set velocity of $i^{th}$ drone to $MIN(V_{MAX},\frac{H_1(PT_h,A_l)}{p^*-p_l})$
		\EndIf
		\State Update $p_l$
		\EndFor
		\EndFor	
		\EndFunction
	\end{algorithmic}
	\caption{Algorithm corresponding to the practical case}
	\label{alg:case2}	
\end{algorithm}

\textbf{Defining Trajectories:} Then, we should define the trajectory paths, $PT_h:\,1\leq h\leq N$, between the post office and the houses in order to deliver the packages with high efficiency and simultaneously provide the uniform coverage. If we were not concerned about the UCA requirement, the most efficient trajectories would have been straight lines from post office to the destinations. Nevertheless, to achieve the UCA requirement, we might need to change trajectories slightly: If needed, we change the straight lines between the post office and the houses in a way that all defined small regions are crossed by at least one trajectory. It means that we want to make sure $((\cup_{h=1}^N{PT_h})\cap A_l\,\neq \varnothing)$ for any region $l$, $1 \leq l \leq S$.

\textbf{Uniform Coverage:} Here, we specifically state the requirement for uniform coverage. Consider the time interval $[0,t]$ where packages are continuously being delivered to their destinations. For any cell $l$, define $c_l(t)$ as the total time that cell is covered (i.e., a drone is flying over that region). The coverage ratio up to time $t$ is defined as $p_l(t)=\frac{c_l(t)}{t}$.
For uniform coverage, we require that for all cells $l=1,2,\cdots, S$, we must have $ \lim_{t \rightarrow \infty} p_l(t)=p^*,$
where $p^*$ is the desired coverage probability. It is worth noting that although to have a rigourous proof we state the condition for the limit case, in practice the convergence is fast as observed in our simulations in Section \ref{sec:result}.

\textbf{Varying Drone Speeds:}  Algorithm 2 is an \emph{adaptive} algorithm, that is, we adjust the velocity of drones when they enter the regions in order to preserve the uniformity in all cells. Intuitively, if the current coverage ratio is less than the desired coverage probability $p^*$ (i.e., $p_l(t)<p^*$), we should decrease the velocity of the drone, and if it is more than the expected coverage probability, the drone should pass this region faster. Lines 21 to 25 of Algorithm~\ref{alg:case2} show this adjustment, where $H_1$ is  Hausdorff measure.

\begin{figure}[htbp]%
	\centering 	
	\includegraphics[width=0.7\columnwidth]{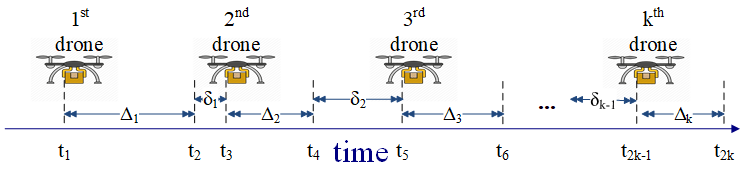} 	
	\caption{Arrival/depature of drones over time within a cell }
	\label{fig:time_axis} 	
\end{figure}

Below we show that we can adjust the velocities in a way to guarantee $ \lim_{t \rightarrow \infty} p_l(t)=p^*$ for all cells, $A_l$, $l=1,2,\cdots, S$. For notational simplicity, we will sometimes drop the subscript $l$ in the rest of the proof. Define $L$ as $H_1((\cup_{h=1}^N{PT_h})\cap A_l)$, i.e., the lengths of the part of trajectories restricted to cell $l$. Fig.~\ref{fig:time_axis} demonstrates arrival/departure of drones over the region during a delivery period for $m$ packages. As you can see, first the drone arrives over the region at time $t_1$ and traverses the cell with speed $V_1$, and leaves the region at time $t_2$. In general, the  $k^{th}$ drone arrives over the cell at time $t_{2k-1}$ and leaves the region at time $t_{2k}$ (traverses the cell with speed $V_k$). The time between the arrival and departure of the $k^{th}$ drone in cell $l$ is denoted by $\Delta_k$ and the time between departure of the $k^{th}$ drone and arrival of the $(k+1)^{th}$ is shown by $\delta_k$. Thus,
\begin{align*}
\Delta_k=t_{2k}-t_{2k-1},\text{and }\delta_k=t_{2k+1}-t_{2k}.
\end{align*}
Suppose the maximum and minimum possible speeds of drones are given by $V_{MAX}$ and $V_{MIN}$. If we define $\Delta_{MAX}=\frac{L}{V_{MIN}}$ and $\Delta_{MIN}=\frac{L}{V_{MAX}}$, then we have $0<\Delta_{MIN} \leq \Delta_k \leq \Delta_{MAX}$. In any practical scenarios, the $\delta_k$ values can not be unlimited. So here we assume that there exist $\delta_{MIN} \ge 0$ and $\delta_{MAX} \ge 0$ such that for all $k$, $0<\delta_{MIN} \leq \delta_k \leq \delta_{MAX}$. Before stating and proving the main theorem, we need the following definition:

	\begin{define} (Causal Velocity Profiles)
		An algorithm for determining $V_j$ for $j=1, 2, \dots$ is said to be casual, if the value of $V_j$ is determined only by the past data up-to time $t_{2j-1}$.
	\end{define}

\begin{thm}\label{thm:case2}
If $V_{MIN}$ and $V_{MAX}$ can be chosen such that $V_{MAX} \geq \frac{L(1-p^*)}{p^*\delta_{MIN}}$ and $V_{MIN} \leq \frac{L(1-p^*)}{p^*\delta_{MAX}}$, then there exists a causal velocity profile such that $\lim_{t\to\infty}p_l(t)=p^*$.\footnote{This theorem is a main result stating that the UCA requirement can be satisfied. The conditions $V_{MIN}$ and $V_{MAX}$ simply state that we should be able to have a large enough range for the velocities to be able to achieve a uniform coverage. The proof is given below, which is a bit technical due to the fact that we want to prove the statement in a very general scenario without making specific assumptions. The readers less interested in the technical proof, can refer to Section \ref{sec:result} to see the simulation results showing the performance of the proposed algorithms for two real neighborhood areas: University of Massachusetts Amherst and Union Point, which is a smart city near Boston.}
\end{thm}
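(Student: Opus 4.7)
The plan is to construct a single explicit causal velocity profile and verify it satisfies the required long-run property; since the theorem only asserts existence, one such profile suffices. The idea is to \emph{match each visit to the idle gap that just preceded it}: when the $k$-th drone enters the cell at time $t_{2k-1}$ ($k\ge 2$), the quantity $\delta_{k-1}=t_{2k-1}-t_{2k-2}$ is already observed, so I would prescribe
\[
\Delta_k \;=\; \frac{p^{*}}{1-p^{*}}\,\delta_{k-1}, \qquad V_k \;=\; \frac{L}{\Delta_k}.
\]
For $k=1$, pick any $\Delta_1\in[\Delta_{MIN},\Delta_{MAX}]$; this choice only contributes a bounded boundary term. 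Causality is automatic since $V_k$ depends only on data up to $t_{2k-1}$.

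The first verification is feasibility. Because $\delta_{MIN}\leq \delta_{k-1}\leq \delta_{MAX}$, the rule gives $\tfrac{p^{*}\delta_{MIN}}{1-p^{*}}\leq \Delta_k\leq \tfrac{p^{*}\delta_{MAX}}{1-p^{*}}$. Rewriting the two hypotheses as $\Delta_{MIN}=L/V_{MAX}\leq \tfrac{p^{*}\delta_{MIN}}{1-p^{*}}$ and $\Delta_{MAX}=L/V_{MIN}\geq \tfrac{p^{*}\delta_{MAX}}{1-p^{*}}$ shows that $\Delta_k\in[\Delta_{MIN},\Delta_{MAX}]$ for every $k$; this is exactly what the assumed bounds on $V_{MIN}$ and $V_{MAX}$ are designed to guarantee.

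Next I would evaluate $p_l$ along the subsequence of departure instants $t_{2k}$. Letting $\delta_0$ denote the time before the first arrival, one has $t_{2k}=\delta_0+\sum_{j=1}^k \Delta_j+\sum_{j=1}^{k-1}\delta_j$ and $c_l(t_{2k})=\sum_{j=1}^k \Delta_j$. Substituting $\Delta_j=\tfrac{p^{*}}{1-p^{*}}\,\delta_{j-1}$ for $j\ge 2$ and writing $S_k=\sum_{j=1}^{k-1}\delta_j$ collapses both sums, yielding
\[
p_l(t_{2k}) \;=\; \frac{\Delta_1+\tfrac{p^{*}}{1-p^{*}}\,S_k}{\delta_0+\Delta_1+\tfrac{1}{1-p^{*}}\,S_k}.
\]
Since $\delta_j\geq \delta_{MIN}>0$ forces $S_k\to\infty$, both numerator and denominator are dominated by their $S_k$ term, and therefore $p_l(t_{2k})\to p^{*}$.

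The only step requiring some care, and the main technical obstacle I would anticipate, is upgrading the limit along $\{t_{2k}\}$ to convergence for \emph{every} $t\to\infty$. Using the uniform bounds $\Delta_k\leq \Delta_{MAX}$ and $\delta_k\leq \delta_{MAX}$, if $k(t)$ denotes the index of the last completed visit before $t$ then $|c_l(t)-C_{k(t)}|\leq \Delta_{MAX}$ and $|t-t_{2k(t)}|\leq \Delta_{MAX}+\delta_{MAX}$, which gives $\bigl|p_l(t)-C_{k(t)}/t_{2k(t)}\bigr|=O(1/t_{2k(t)})$. The lower bound $t_{2k}\ge k\,\delta_{MIN}$ ensures $k(t)\to\infty$ whenever $t\to\infty$, and combining with the subsequence limit yields $\lim_{t\to\infty} p_l(t)=p^{*}$, as desired.
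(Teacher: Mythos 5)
Your proof is correct, but it takes a genuinely different route from the paper's. The paper proves existence via a feedback ``min-max'' controller: set $V_k=V_{MIN}$ whenever the running ratio $p(t_{2k-1})$ is below $p^*$ and $V_k=V_{MAX}$ otherwise, and then establish convergence through a chain of lemmas (a $\limsup$/$\liminf$ sandwich for the two extreme constant profiles, monotonicity of $p(t_{2k+1})$ on either side of $p^*$, and the fact that consecutive increments $|p(t_{j+1})-p(t_j)|$ vanish), before extending from the subsequence $\{t_{2k}\}$ to all $t$ exactly as you do. You instead use a feedforward matching rule $\Delta_k=\frac{p^*}{1-p^*}\delta_{k-1}$, which is causal, is feasible precisely because the hypotheses on $V_{MIN}$ and $V_{MAX}$ translate into $\frac{p^*\delta_{MIN}}{1-p^*}\ge\Delta_{MIN}$ and $\frac{p^*\delta_{MAX}}{1-p^*}\le\Delta_{MAX}$, and makes the two sums telescope into the closed form $p_l(t_{2k})=\bigl(\Delta_1+\tfrac{p^*}{1-p^*}S_k\bigr)/\bigl(\delta_0+\Delta_1+\tfrac{1}{1-p^*}S_k\bigr)$ with $S_k\to\infty$. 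This yields a much shorter argument with an explicit $O(1/k)$ convergence rate, and it uses the hypotheses in exactly the role they were designed for (feasibility of the prescribed dwell times). What the paper's longer route buys is robustness and fidelity to the implemented Algorithm 2: the min-max profile is driven by the observed coverage ratio itself, so it self-corrects from arbitrary initial transients and models the adaptive speed adjustment actually used, whereas your rule depends on the exact proportionality between each visit and the preceding idle gap. Both proofs rely on the same standing assumptions $0<\delta_{MIN}\le\delta_k\le\delta_{MAX}$ and on the same final sandwich step to pass from the departure-time subsequence to all $t$, so your argument is a valid and more economical proof of the stated existence claim.
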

Before proving this theorem, we provide some lemmas that are later used during the proof. Also for simplicity, we assume only one path goes through $A_l$ (The proof can easily be extended to multiple paths). Since at any time, at most one drone flies over each cell, we can say that $c_l(t_{2k})=\Sigma_{i=1}^k\Delta_k$ and also $c_l(t_{2k+1})=\Sigma_{i=1}^k\Delta_i$. From these expressions, the following equations can be concluded:
	\vspace{-7mm}
	\begin{equation}
	p(t_{2k})=\frac{c(t_{2k})}{t_{2k}}=\frac{\Sigma_{i=1}^k\Delta_i}{t_{1}+\Sigma_{i=1}^k\Delta_i+\Sigma_{i=1}^{k-1}\delta_j}.
	\end{equation}
	\begin{equation}
	p(t_{2k+1})=\frac{c(t_{2k+1})}{t_{2k+1}}=\frac{\Sigma_{i=1}^k\Delta_i}{t_{1}+\Sigma_{i=1}^k(\Delta_i+\delta_i)}.
	\end{equation}
	
	\begin{lem} \label{lem:timegoesinfinity}
		If all drones traverse the cell with maximum speed at any time i.e. $V_j=V_{MAX}$ for all $j$, then $\limsup\limits_{k\rightarrow\infty}p(t_{2k})\leq p^*$ and if all drones traverse the cell with minimum speed at any time i.e. $V_j=V_{MIN}$ for all $j$, then $\liminf\limits_{k\rightarrow\infty}p(t_{2k+1})\geq p^*$.
	\end{lem}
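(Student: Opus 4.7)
The plan is to establish both inequalities directly from the explicit expressions for $p(t_{2k})$ and $p(t_{2k+1})$ stated just above the lemma, together with the two range conditions on $V_{MAX}$ and $V_{MIN}$ assumed in Theorem~\ref{thm:case2}. The key observation is that the hypotheses $V_{MAX}\geq\frac{L(1-p^*)}{p^*\delta_{MIN}}$ and $V_{MIN}\leq\frac{L(1-p^*)}{p^*\delta_{MAX}}$ can be rewritten, using $\Delta_{MIN}=L/V_{MAX}$ and $\Delta_{MAX}=L/V_{MIN}$, in the equivalent forms
\[
\frac{\Delta_{MIN}}{\Delta_{MIN}+\delta_{MIN}}\leq p^*,\qquad
\frac{\Delta_{MAX}}{\Delta_{MAX}+\delta_{MAX}}\geq p^*.
\]
Thus the extremal speed choices lead to ratios that sandwich the target $p^*$ from below and above, respectively; the remaining work is to pass this sandwiching to the time averages.

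First I would handle the $V_j=V_{MAX}$ case. When every traversal uses maximum speed, each $\Delta_i$ equals $\Delta_{MIN}$, so the formula for $p(t_{2k})$ becomes
\[
p(t_{2k})=\frac{k\,\Delta_{MIN}}{t_1+k\,\Delta_{MIN}+\sum_{i=1}^{k-1}\delta_i}.
\]
Using $\delta_i\geq\delta_{MIN}$ term-by-term, this is bounded above by $\frac{k\,\Delta_{MIN}}{k\,\Delta_{MIN}+(k-1)\delta_{MIN}}$, whose limit as $k\to\infty$ is $\frac{\Delta_{MIN}}{\Delta_{MIN}+\delta_{MIN}}$. By the rewritten hypothesis, this is at most $p^*$, yielding $\limsup_{k\to\infty}p(t_{2k})\leq p^*$.

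Next I would treat the $V_j=V_{MIN}$ case symmetrically. Each $\Delta_i$ equals $\Delta_{MAX}$, so
\[
p(t_{2k+1})=\frac{k\,\Delta_{MAX}}{t_1+k\,\Delta_{MAX}+\sum_{i=1}^{k}\delta_i}.
\]
Upper-bounding the denominator via $\delta_i\leq\delta_{MAX}$ produces the lower bound $\frac{k\,\Delta_{MAX}}{t_1+k(\Delta_{MAX}+\delta_{MAX})}$, which tends to $\frac{\Delta_{MAX}}{\Delta_{MAX}+\delta_{MAX}}\geq p^*$, giving $\liminf_{k\to\infty}p(t_{2k+1})\geq p^*$.

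I do not expect a serious technical obstacle here: the argument is essentially an elementary manipulation of Cesàro-type averages combined with a direct algebraic rewriting of the threshold hypotheses on $V_{MIN}$ and $V_{MAX}$. The only mild subtlety is ensuring the bookkeeping on the indices of the $\delta_i$ sums is consistent between the $t_{2k}$ and $t_{2k+1}$ formulas, and noting that the constant $t_1$ and the off-by-one in the upper limit of $\sum\delta_i$ contribute only $O(1)$ terms that vanish in the limit and therefore do not disturb the inequalities.
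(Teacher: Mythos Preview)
Your proposal is correct and follows essentially the same approach as the paper: substitute the constant traversal time, bound the $\delta_i$'s by their extremal values, and take the limit using the hypotheses on $V_{MAX}$ and $V_{MIN}$. Your preliminary rewriting of the speed hypotheses as $\frac{\Delta_{MIN}}{\Delta_{MIN}+\delta_{MIN}}\leq p^*$ and $\frac{\Delta_{MAX}}{\Delta_{MAX}+\delta_{MAX}}\geq p^*$ is a slightly cleaner packaging than the paper's direct substitution of $V_{MAX}/L$ and $V_{MIN}/L$ into the reciprocal form $\frac{1}{1+\cdots}$, but the content is identical; also, you work with $p(t_{2k})$ for the first part as the lemma states, whereas the paper's own proof in fact computes $p(t_{2k+1})$ there, a harmless discrepancy since the two differ by $o(1)$.
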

	\begin{proof}[Proof of Lemma \ref{lem:timegoesinfinity}]
		If all drones pass the cell with the maximum velocity  $V_{MAX}$, it takes $\frac{L}{V_{MAX}}$ to leave the cell and we can obtain the probability coverage as follows:
		\begin{align*}
		P(t_{2k+1})=\frac{\Sigma_{i=1}^k\Delta_i}{t_{1}+\Sigma_{i=1}^k\Delta_i+\Sigma_{i=1}^{k}\delta_j}=\frac{\frac{kL}{V_{MAX}}}{t_1+\frac{kL}{V_{MAX}}+\Sigma_{i=1}^{k}\delta_j}=\frac{1}{1+\frac{t_1V_{MAX}}{kL}+\frac{V_{MAX}}{kL}\Sigma_{i=1}^{k}\delta_j}.
		\end{align*}
		By using $V_{MAX} \geq \frac{L(1-p^*)}{p^*\delta_{MIN}}$, we have
		\begin{align*}
	     \frac{1}{1+\frac{t_1V_{MAX}}{kL}+\frac{V_{MAX}}{kL}\Sigma_{i=1}^{k}\delta_j} \leq \frac{1}{1+\frac{t_1V_{MAX}}{kL}+\frac{1-p^*}{p^*\delta_{MIN}}(\frac{1}{k}\Sigma_{i=1}^{k}\delta_j)},
		\end{align*}
		and since $(\frac{1}{k}\Sigma_{i=1}^{k}\delta_j) \geq \delta_{MIN}$, we have
		\begin{align*}
		\limsup\limits_{k\rightarrow\infty}p(t_{2k+1}) \leq \frac{1}{1+\frac{1-p^*}{p^*}1},
		\end{align*}
		and as a result,
		$\limsup\limits_{k\rightarrow\infty}p(t_{2k+1}) \leq p^*.$
		
		Next, we show if all drones traverse the cell with minimum speed at any time i.e. $V_j=V_{MIN}$ for all $j$, then $\liminf\limits_{k\rightarrow\infty}p(t_{2k+1})\geq p^*$. In this case,
		\begin{align*}
		P(t_{2k+1})=\frac{\Sigma_{i=1}^k\Delta_i}{t_{1}+\Sigma_{i=1}^k\Delta_i+\Sigma_{i=1}^{k}\delta_j}=\frac{\frac{kL}{V_{MIN}}}{t_1+\frac{kL}{V_{MIN}}+\Sigma_{i=1}^{k}\delta_j}=\frac{1}{1+\frac{t_1V_{MIN}}{kL}+\frac{V_{MIN}}{kL}\Sigma_{i=1}^{k}\delta_j}.
		\end{align*}
		By using $V_{MIN} \leq \frac{L(1-p^*)}{p^*\delta_{MAX}}$, we have
		\begin{align*}
		\frac{1}{1+\frac{t_1V_{MIN}}{kL}+\frac{V_{MIN}}{kL}\Sigma_{i=1}^{k}\delta_j} \geq \frac{1}{1+\frac{t_1V_{MIN}}{kL}+\frac{1-p^*}{p^*\delta_{MAX}}(\frac{1}{k}\Sigma_{i=1}^{k}\delta_j)}=\frac{1}{1+\frac{t_1V_{MIN}}{kL}+\frac{1-p^*}{p^*}(\frac{1}{k\delta_{MAX}}\Sigma_{i=1}^{k-1}\delta_j)}.
		\end{align*}
		and since $(\frac{1}{k}\Sigma_{i=1}^{k}\delta_j) \leq \delta_{MAX}$, we have
		\begin{align*}
		\liminf\limits_{k\rightarrow\infty}p(t_{2k+1}) \geq \frac{1}{1+\frac{1-p^*}{p^*}1},
		\end{align*}
		and as a result,
		$
		\liminf\limits_{k\rightarrow\infty}p(t_{2k+1}) \geq p^*.
		$
	\end{proof}
    Note that the above argument can be repeated for the cases where the first $k_0$ values of $V_j$'s are arbitrary as they do not impact the limiting behavior. So we can provide the following corollary.
	\begin{cor}
	Let $k_0$ be a positive integer. If we have a sequence $V_j$ for $j = 1, 2,\dots , \infty$ such that for all $j \ge k_0$, $V_j = V_{MAX}$, then
	\renewcommand{\theequation}{\Alph{equation}}
	\setcounter{equation}{0}
	\vspace{-7mm}
	\begin{equation}\label{eq:A}
	\begin{aligned}
	\limsup\limits_{k\rightarrow\infty}p(t_{2k+1}) \leq p^*.
	\end{aligned}
	\end{equation}
	Similarly, If we have a sequence $V_j$ for $j = 1, 2,\dots , \infty$ such that for all $j \ge k_0$, $V_j = V_{MIN}$, then
	\vspace{-7mm}
	\begin{equation}\label{eq:B}
	\begin{aligned}
	\liminf\limits_{k\rightarrow\infty}p(t_{2k+1})\geq p^*.
	\end{aligned}
	\end{equation}
	\end{cor}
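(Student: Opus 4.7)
The plan is to reduce the corollary directly to Lemma \ref{lem:timegoesinfinity} by isolating the ``burn-in'' contribution of the first $k_0 - 1$ terms, whose total is uniformly bounded, and showing that after normalization by $k$ it vanishes. The limiting behavior of $p(t_{2k+1})$ is therefore dictated entirely by the tail $j \geq k_0$, which matches the setting of the lemma.

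First I would split each partial sum at index $k_0$ and write
\[
p(t_{2k+1}) = \frac{S_\Delta^{(0)} + \sum_{i=k_0}^{k}\Delta_i}{t_{1} + S_\Delta^{(0)} + S_\delta^{(0)} + \sum_{i=k_0}^{k}(\Delta_i + \delta_i)},
\]
where $S_\Delta^{(0)} = \sum_{i=1}^{k_0-1}\Delta_i \leq (k_0-1)\Delta_{MAX}$ and $S_\delta^{(0)} = \sum_{i=1}^{k_0-1}\delta_i \leq (k_0-1)\delta_{MAX}$ are constants independent of $k$, using $\Delta_{MAX} = L/V_{MIN} < \infty$ (since $V_{MIN}>0$) and the standing assumption $\delta_{MAX} < \infty$. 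Note that $t_1$ is likewise a constant in $k$, as it is just the first arrival time of any drone at the cell boundary.

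Next, specializing to $V_j = V_{MAX}$ for all $j \geq k_0$ gives $\Delta_j = L/V_{MAX}$ for those indices, so $\sum_{i=k_0}^{k}\Delta_i = (k-k_0+1)L/V_{MAX}$. Dividing numerator and denominator by $k$, the three bounded quantities $t_1$, $S_\Delta^{(0)}$, and $S_\delta^{(0)}$ all vanish in the limit, and the remaining expression is asymptotically identical to the one computed in the proof of Lemma \ref{lem:timegoesinfinity}. Combining the hypothesis $V_{MAX} \geq L(1-p^*)/(p^*\delta_{MIN})$ with the lower bound $\delta_j \geq \delta_{MIN}$ then yields \eqref{eq:A}. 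The symmetric case $V_j = V_{MIN}$ for $j \geq k_0$ is handled in the same way, with the inequalities reversed and the hypothesis $V_{MIN} \leq L(1-p^*)/(p^*\delta_{MAX})$ together with $\delta_j \leq \delta_{MAX}$ producing \eqref{eq:B}.

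The main (and essentially only) obstacle is verifying that the prefix contribution is genuinely negligible; this rests on the two standing hypotheses $V_{MIN} > 0$ and $\delta_{MAX} < \infty$, both of which are explicitly posited in the setup, and on the fact that $k_0$ is a fixed positive integer so $S_\Delta^{(0)}$ and $S_\delta^{(0)}$ do not grow with $k$. Once these are in hand, no new inequality is needed beyond the two already established in the lemma; the argument is simply the lemma applied to the shifted index sequence starting at $k_0$.
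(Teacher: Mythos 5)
Your proof is correct and follows essentially the same route as the paper: the paper justifies this corollary with the one-line observation that the first $k_0$ velocities ``do not impact the limiting behavior'' and then reuses the computation of Lemma \ref{lem:timegoesinfinity}, which is exactly what you do, only with the prefix sums $S_\Delta^{(0)}$, $S_\delta^{(0)}$ written out and shown to vanish after division by $k$. No gaps; your version is simply a more explicit rendering of the paper's argument.
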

    For the brevity of the notation, let's define $p(t_{2k+1}, V_{MIN})$ as the value of $p(t_{2k+1})$ when for all $j \ge k_0$, $V_j = V_{MIN}$, and define $p(t_{2k+1}, V_{MAX})$, similarly. Thus, we have
    \[ \liminf\limits_{k\rightarrow\infty}p(t_{2k+1}, V_{MIN})\geq p^*.\]
    Now consider two cases: If we have
    \[ \limsup\limits_{k\rightarrow\infty}p(t_{2k+1}, V_{MIN})\leq p^*,\]
    Then, we will have
    \[ \lim \limits_{k\rightarrow\infty}p(t_{2k+1}, V_{MIN})= p^*.\]
    Otherwise, we must have
    \[ \limsup\limits_{k\rightarrow\infty}p(t_{2k+1}, V_{MIN})> p^*.\]
    So we come up with the following corollaries:
    \begin{cor} \label{cor:V_MIN}
    For any sequence of $p(t_{2k+1})$, one of the following is true:
    \[ \lim \limits_{k\rightarrow\infty}p(t_{2k+1}, V_{MIN})= p^*,\text{or }
 \limsup\limits_{k\rightarrow\infty}p(t_{2k+1}, V_{MIN})> p^*.\]
    \end{cor}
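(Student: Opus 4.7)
The plan is to deduce Corollary \ref{cor:V_MIN} as an immediate case distinction on the value of $\limsup_{k\to\infty} p(t_{2k+1}, V_{MIN})$, using the fact that the lower bound $\liminf_{k\to\infty} p(t_{2k+1}, V_{MIN}) \geq p^*$ has already been established in the preceding corollary (applied to the tail sequence $V_j = V_{MIN}$ for all $j \geq k_0$). Since $\liminf \leq \limsup$ is automatic, the limsup can never be strictly less than $p^*$, so only two mutually exclusive possibilities remain.

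First, I would recall the inequality
\[
\liminf_{k\to\infty} p(t_{2k+1}, V_{MIN}) \;\geq\; p^*
\]
from equation \eqref{eq:B}. Next, I would split on whether $\limsup_{k\to\infty} p(t_{2k+1}, V_{MIN}) \leq p^*$ or $\limsup_{k\to\infty} p(t_{2k+1}, V_{MIN}) > p^*$. In the first case, the standard sandwich
\[
p^* \;\leq\; \liminf_{k\to\infty} p(t_{2k+1}, V_{MIN}) \;\leq\; \limsup_{k\to\infty} p(t_{2k+1}, V_{MIN}) \;\leq\; p^*
\]
forces equality throughout, which implies the sequence converges and its limit is exactly $p^*$. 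The second case is the second alternative in the statement of the corollary, so nothing further is required.

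There is essentially no technical obstacle: the corollary is a bookkeeping consequence of the previous lemma together with the basic relation $\liminf \leq \limsup$. The only thing to be careful about is correctly interpreting the notation $p(t_{2k+1}, V_{MIN})$ as meaning that we apply the preceding corollary with $V_j = V_{MIN}$ eventually, so that \eqref{eq:B} applies verbatim to this specific sequence. Once that identification is made, the dichotomy and the sandwich argument are immediate, and the statement of the corollary follows. This step is really a setup for the subsequent construction of an adaptive (causal) velocity profile that toggles between $V_{MIN}$ and $V_{MAX}$ to drive $p_l(t)$ to $p^*$, which is where the real work of Theorem \ref{thm:case2} lies.
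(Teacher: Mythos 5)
Your proof is correct and follows essentially the same route as the paper: both invoke the bound $\liminf_{k\to\infty} p(t_{2k+1}, V_{MIN}) \geq p^*$ from \eqref{eq:B} and then split on whether the corresponding $\limsup$ is at most $p^*$ (forcing convergence to $p^*$ by the sandwich $\liminf \leq \limsup$) or strictly exceeds $p^*$. Nothing further is needed.
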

     \begin{cor} \label{cor:V_MAX}
    For any sequence of $p(t_{2k+1})$, one of the following is true:
    \[ \lim \limits_{k\rightarrow\infty}p(t_{2k+1}, V_{MAX})= p^*\text{, or }
 \liminf\limits_{k\rightarrow\infty}p(t_{2k+1}, V_{MAX})< p^*.\]
    \end{cor}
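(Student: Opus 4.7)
The plan is to read Corollary~\ref{cor:V_MAX} as an immediate logical dichotomy extracted from the one-sided bound (\ref{eq:A}) already established in the preceding corollary. Since (\ref{eq:A}) guarantees that whenever the tail of the velocity sequence is pinned at $V_{MAX}$ we have $\limsup_{k \to \infty} p(t_{2k+1}, V_{MAX}) \leq p^*$, the only remaining task is to distinguish the two exhaustive possibilities for the matching $\liminf$.

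First, I would abbreviate $a_k := p(t_{2k+1}, V_{MAX})$ and invoke the elementary real-analysis fact that, for any real sequence, $\liminf_k a_k \leq \limsup_k a_k$, with equality if and only if the ordinary limit $\lim_k a_k$ exists and coincides with their common value. From (\ref{eq:A}) we already know $\limsup_k a_k \leq p^*$, so the sequence is bounded above in the $\limsup$ sense by $p^*$ regardless of what happens before index $k_0$.

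Next, I would split on the value of $\liminf_k a_k$ into two exhaustive cases. If $\liminf_k a_k \geq p^*$, then combining with $\limsup_k a_k \leq p^*$ produces the sandwich $p^* \leq \liminf_k a_k \leq \limsup_k a_k \leq p^*$, which forces $\lim_k a_k = p^*$ and yields the first alternative of the corollary. Otherwise, $\liminf_k a_k < p^*$, which is exactly the second alternative. Since every real number is either $\geq p^*$ or $< p^*$, the two cases cover all possibilities and the dichotomy is established. The mirror-image Corollary~\ref{cor:V_MIN} is obtained by the same argument applied to the $\liminf$-side bound (\ref{eq:B}), swapping the roles of $\liminf$ and $\limsup$.

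The step that requires the most care — though it is not really an obstacle — is ensuring that the notation $p(t_{2k+1}, V_{MAX})$ is interpreted consistently with the preceding corollary: namely as the coverage ratio along a realization whose velocities are eventually pinned at $V_{MAX}$ from some index $k_0$ onward, with the first $k_0$ values arbitrary. Under that reading, the bound (\ref{eq:A}) applies verbatim without any further conditions on $\{\delta_k\}$ or $\{\Delta_k\}$ beyond those already standing in the section, and the corollary reduces to a one-line trichotomy-style observation about $\limsup$ and $\liminf$.
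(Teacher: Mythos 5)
Your proposal is correct and follows essentially the same route as the paper: it invokes the tail bound $\limsup_{k\to\infty}p(t_{2k+1},V_{MAX})\leq p^*$ from the preceding corollary and then splits on whether $\liminf_{k\to\infty}p(t_{2k+1},V_{MAX})$ is at least $p^*$ (forcing the limit to equal $p^*$ by the sandwich) or strictly below $p^*$, which is exactly the dichotomy the paper states. The paper writes this two-case argument explicitly only for the $V_{MIN}$ version and leaves the $V_{MAX}$ mirror implicit, so your write-up is, if anything, slightly more complete.
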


	\begin{define} (Min-Max Algorithm)
		The min-max algorithm for choosing $V_i$'s is defined as follows: We choose $V_1=V_{MIN}$. For $k \ge 1$, if $p(t_{2k-1})\leq p^*$, then $V_{k}=V_{MIN}$, otherwise $V_{k}=V_{MAX}$.
	\end{define}
\emph{Note:} The min-max algorithm is used below to prove Theorem \ref{thm:case2}. Nevertheless, there are various choices of velocity profiles $V_j,j=1, 2,\dots,\infty$ that satisfy Theorem \ref{thm:case2}. Their differences are in their rate of convergence and their practicality. The one we have chosen in our algorithm provides a very fast convergence (Algorithm \ref{alg:case2}) and also results in much smoother operation (the changes in speeds can actually be made minimal and gradual suitable for practical implementation). However, for the sake of proofs, it is easier to use the min-max algorithm defined above.
\begin{lem} \label{lem:updown}
For the min-max algorithm, the following statements are true:
\begin{enumerate}
  \item If $p(t_{2k-1})<p^*$, then $p(t_{2k+1}) \geq p(t_{2k-1})$.
  \item If $p(t_{2k-1})>p^*$, then $p(t_{2k+1}) \leq p(t_{2k-1})$.
\end{enumerate}
\end{lem}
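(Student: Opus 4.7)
The plan is to apply the classical mediant inequality: if $a,b,c,d>0$ and $a/b \leq c/d$, then $a/b \leq (a+c)/(b+d) \leq c/d$. This is the natural tool here because $p(t_{2k+1})$ is obtained from $p(t_{2k-1})$ by adjoining a single cell-visit of duration $\Delta_k$ embedded in an elapsed-time increment $\Delta_k+\delta_k$, so the update is literally a mediant.

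Concretely, setting $A := \sum_{i=1}^{k-1}\Delta_i$ and $B := t_1 + \sum_{i=1}^{k-1}(\Delta_i+\delta_i)$, the displayed expressions for $p(t_{2k-1})$ and $p(t_{2k+1})$ read
\[ p(t_{2k-1})=\frac{A}{B}, \qquad p(t_{2k+1})=\frac{A+\Delta_k}{B+\Delta_k+\delta_k}. \]
By the mediant inequality, $p(t_{2k+1})\geq p(t_{2k-1})$ holds if and only if $\Delta_k/(\Delta_k+\delta_k)\geq p(t_{2k-1})$, and an analogous equivalence governs the reverse direction. The proof is thus reduced to bounding the ratio $\Delta_k/(\Delta_k+\delta_k)$ against $p^*$ in the correct direction, using the case-dependent choice made by the min-max algorithm together with the hypotheses of Theorem~\ref{thm:case2}.

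For part (1), when $p(t_{2k-1})<p^*$ the min-max rule sets $V_k=V_{MIN}$, so $\Delta_k = L/V_{MIN} = \Delta_{MAX}$. Since $\delta_k\leq \delta_{MAX}$, monotonicity of $x\mapsto x/(x+\delta)$ in $\delta$ yields $\Delta_k/(\Delta_k+\delta_k)\geq \Delta_{MAX}/(\Delta_{MAX}+\delta_{MAX})$. An elementary rearrangement shows that the hypothesis $V_{MIN}\leq L(1-p^*)/(p^*\delta_{MAX})$ is exactly equivalent to $\Delta_{MAX}/(\Delta_{MAX}+\delta_{MAX})\geq p^*$; hence $\Delta_k/(\Delta_k+\delta_k)\geq p^*>p(t_{2k-1})$, and the mediant step concludes. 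Part (2) is entirely symmetric: $V_k=V_{MAX}$ gives $\Delta_k=\Delta_{MIN}=L/V_{MAX}$, and the bound $V_{MAX}\geq L(1-p^*)/(p^*\delta_{MIN})$ combined with $\delta_k\geq \delta_{MIN}$ translates into $\Delta_k/(\Delta_k+\delta_k)\leq p^*<p(t_{2k-1})$.

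The only real obstacle will be bookkeeping: verifying that the monotone dependence on $\delta$ and the worst-case bound on $\delta_k$ are applied in the direction matching the side of $p^*$ on which we currently sit, and that the algebraic equivalence between the assumed bounds on $V_{MIN},V_{MAX}$ and the threshold $p^*$ for the ratio $\Delta/(\Delta+\delta)$ is made transparent. Once these are in place, both statements fall out immediately from the mediant inequality.
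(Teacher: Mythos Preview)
Your proposal is correct and follows essentially the same route as the paper. The paper writes $p(t_{2k+1})=\dfrac{c(t_{2k-1})+\Delta_{MAX}}{t_{2k-1}+\Delta_{MAX}+\delta_k}$, bounds $\dfrac{\Delta_{MAX}}{\Delta_{MAX}+\delta_k}\geq \dfrac{\Delta_{MAX}}{\Delta_{MAX}+\delta_{MAX}}\geq p^*$ via the hypothesis on $V_{MIN}$, and then ``combines'' this with $p(t_{2k-1})<p^*$ to conclude; that combining step is precisely the mediant inequality you have named explicitly, so the two arguments coincide (your version is just more transparent about the key step).
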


\begin{proof}
If $p(t_{2k-1})<p^*$, then $V_{k}=V_{MIN}$, so $\Delta_k=\Delta_{MAX}=\frac{L}{V_{MIN}}$.
	\renewcommand{\theequation}{\arabic{equation}}
	\setcounter{equation}{9}
  	\begin{equation}\label{eq:tk+-1}
	p(t_{2k+1})=\frac{c(t_{2k+1})}{t_{2k+1}}=\frac{c(t_{2k-1})+\Delta_{MAX}}{t_{2k-1}+\Delta_{MAX}+\delta_k}
	\end{equation}
     Now note that
     	\begin{equation} \label{eq:DMAXdMAX}
	\frac{\Delta_{MAX}}{\Delta_{MAX}+\delta_k} \geq  \frac{\Delta_{MAX}}{\Delta_{MAX}+\delta_{MAX}} \geq p^*.
	\end{equation}
     The last inequality is the direct result of the main assumption $V_{MIN} \leq \frac{L(1-p^*)}{p^*\delta_{MAX}}$.
     Now by combining $p(t_{2k-1})=\frac{c(t_{2k-1})}{t_{2k-1}}<p^*$ and Equations \ref{eq:tk+-1} and \ref{eq:DMAXdMAX}, we conclude $p(t_{2k+1}) \geq p(t_{2k-1})$. The second statement of the lemma can be proved similarly.
\end{proof}

\begin{lem} \label{lem:tj+1tj}
For the min-max algorithm, we have
$\lim\limits_{j\rightarrow\infty}\left| p(t_{j+1}) - p(t_{j}) \right| = 0$.
\end{lem}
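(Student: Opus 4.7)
The plan is to show this is essentially a generic fact about running averages of bounded increments, and does not rely on anything specific to the min-max rule beyond the already-established bounds $\Delta_{MIN}\le\Delta_k\le\Delta_{MAX}$ and $\delta_{MIN}\le\delta_k\le\delta_{MAX}$. The key observation is that $p(t_j)=c(t_j)/t_j$, and between consecutive epochs the numerator changes by at most $\Delta_{MAX}$ while the denominator is growing without bound; dividing something bounded by something going to infinity should force the increments to vanish.

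First I would split into the two parities. For $j=2k-1$ (arrival to departure), $c$ grows by $\Delta_k$ and $t$ grows by $\Delta_k$, so a direct computation gives
\begin{equation*}
p(t_{2k}) - p(t_{2k-1})
 = \frac{c(t_{2k-1})+\Delta_k}{t_{2k-1}+\Delta_k} - \frac{c(t_{2k-1})}{t_{2k-1}}
 = \frac{\Delta_k\bigl(t_{2k-1}-c(t_{2k-1})\bigr)}{t_{2k-1}\bigl(t_{2k-1}+\Delta_k\bigr)}.
\end{equation*}
Using $0\le c(t_{2k-1})\le t_{2k-1}$ and $\Delta_k\le\Delta_{MAX}$, the absolute value is bounded by $\Delta_{MAX}/t_{2k-1}$. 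For $j=2k$ (departure to next arrival), $c$ is unchanged and $t$ increases by $\delta_k$, so
\begin{equation*}
p(t_{2k+1})-p(t_{2k}) \;=\; -\,\frac{c(t_{2k})\,\delta_k}{t_{2k}\,t_{2k+1}},
\end{equation*}
whose absolute value is at most $\delta_{MAX}/t_{2k}$.

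Next I would argue that $t_j\to\infty$. Each step of $t_j$ is either $\Delta_k\ge\Delta_{MIN}>0$ or $\delta_k\ge\delta_{MIN}>0$, so $t_j$ grows at least linearly in $j$; in particular $t_j\to\infty$. Combining with the two bounds above yields
\begin{equation*}
\bigl|p(t_{j+1})-p(t_j)\bigr| \;\le\; \frac{\max(\Delta_{MAX},\delta_{MAX})}{t_j} \;\longrightarrow\; 0,
\end{equation*}
which is the claim.

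Since this is just algebra plus a divergence argument, there is no real obstacle; the only thing to be slightly careful about is that the identity used in the arrival-to-departure case requires $c(t_{2k-1})\le t_{2k-1}$ (which is immediate since coverage time cannot exceed elapsed time), and that the min-max rule plays no role here beyond ensuring $V_k\in\{V_{MIN},V_{MAX}\}$ so that $\Delta_k\in[\Delta_{MIN},\Delta_{MAX}]$. I would remark at the end that this lemma in fact holds for any causal velocity profile with $V_k\in[V_{MIN},V_{MAX}]$, which is useful for the subsequent convergence argument in Theorem~\ref{thm:case2}.
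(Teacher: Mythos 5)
Your proof is correct and follows essentially the same route as the paper's: write the consecutive differences of $p(t_j)=c(t_j)/t_j$ explicitly, bound the numerator using $\Delta_k\le\Delta_{MAX}$, $\delta_k\le\delta_{MAX}$, and $c(t)\le t$, and note that $t_j\to\infty$ because each increment is bounded below by $\min(\Delta_{MIN},\delta_{MIN})>0$. The only (welcome) difference is that you work out both parities explicitly where the paper computes only the $t_{2k}\to t_{2k+1}$ case and declares the other ``similar,'' and your closing remark that the lemma holds for any velocity profile with $V_k\in[V_{MIN},V_{MAX}]$ is accurate.
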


\begin{proof}
It suffices to show $\lim\limits_{k\rightarrow\infty}\left| p(t_{2k}) - p(t_{2k+1}) \right| = 0$ and $\lim\limits_{k\rightarrow\infty}\left| p(t_{2k}) - p(t_{2k-1}) \right| = 0$. The proofs are similar, so we just show the first one. Recall that $P(t_{2k})=\frac{\Sigma_{i=1}^k\Delta_i}{t_{1}+\Sigma_{i=1}^k\Delta_i+\Sigma_{j=1}^{k-1}\delta_j}=\frac{U_k}{W_k}$. Thus, we have $p(t_{2k+1})=\frac{U_k}{W_k+\delta_k}$.
         Remember, that for $\delta_{MIN} \ge 0$ and $\delta_{MAX} \ge 0$, we have for all $k$, $0<\delta_{MIN} \leq \delta_k \leq \delta_{MAX}$ and $0<\Delta_{MIN} \leq \Delta_k \leq \Delta_{MAX}$. We have
        \begin{align*}
       	\begin{dcases*}
		k\Delta_{MIN} \leq U_k \leq k\Delta_{MAX}\\
		t_1+ k\Delta_{MIN} + (k-1) \delta_{MIN} \leq W_k \leq t_1+ k\Delta_{MAX} + (k-1) \delta_{MAX}.
		\end{dcases*}
		\end{align*}
Thus, $\lim_{k \rightarrow \infty} U_k=\infty$ and $\lim_{k \rightarrow \infty} W_k=\infty$, and their ratio $\frac{W_k}{U_k}$ is bounded. Therefore, we can conclude that
\vspace{-7mm}
		\begin{align*}
		\left|p(t_{2k})-p(t_{2(k+1)})\right| = \frac{\delta_k U_k}{(W_k+\delta_k)(W_k)} \to 0
		\end{align*}
		as $k$ goes to infinity.
\end{proof}

	\begin{lem} \label{lem:velocityprogfileexits}
		There exists a casual algorithm $V_j$ for $j = 1, 2,\dots , \infty$ such that
		
		\begin{equation}\label{eq:lemma3}
		\begin{aligned}
		\lim\limits_{j\rightarrow\infty}p(t_{j})= p^*.
		\end{aligned}
		\end{equation}		
	\end{lem}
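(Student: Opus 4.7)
The plan is to take the causal algorithm in the statement to be the min-max algorithm defined just above, and to show it forces $\lim_{j\to\infty} p(t_j)=p^*$. Setting $a_k := p(t_{2k-1})$, Lemma \ref{lem:tj+1tj} gives $|p(t_{j+1})-p(t_j)|\to 0$, so by the triangle inequality $|a_{k+1}-a_k|=|p(t_{2k+1})-p(t_{2k-1})|\to 0$ as well, and if $a_k\to p^*$ then the full sequence $p(t_j)$ converges to $p^*$. Hence the task reduces to proving $a_k\to p^*$.

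I would then split into three cases based on the eventual sign of $a_k-p^*$. If $a_k\leq p^*$ for all sufficiently large $k$, the min-max rule forces $V_k=V_{MIN}$ from some index on, so Corollary \ref{cor:V_MIN} yields either $\lim a_k=p^*$ or $\limsup a_k>p^*$; the latter contradicts the case assumption, leaving $\lim a_k=p^*$. The symmetric case $a_k>p^*$ eventually is handled identically via Corollary \ref{cor:V_MAX}, where $\liminf a_k<p^*$ would contradict the assumption.

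The heart of the proof is the remaining oscillating case, in which both $\{k: a_k\leq p^*\}$ and $\{k: a_k>p^*\}$ are infinite. For a fixed $\eta>0$, I would use the small-step bound $|a_{k+1}-a_k|<\eta$ for all $k\geq K$ to partition the indices beyond $K$ into maximal \emph{excursion blocks} of consecutive $k$ for which $a_k$ stays on one side of $p^*$. By Lemma \ref{lem:updown}, $a_k$ is non-increasing on any above-block and non-decreasing on any below-block, so the extreme deviation of $a_k$ from $p^*$ within a block is attained at the block's first index. But that first index is adjacent to a value on the opposite side of $p^*$, and the small-step bound then forces its deviation from $p^*$ to be at most $\eta$. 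Therefore $p^*-\eta < a_k < p^*+\eta$ for every $k\geq K$, and since $\eta>0$ was arbitrary we conclude $a_k\to p^*$.

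The main obstacle is precisely this oscillating case: neither the within-regime monotonicity of Lemma \ref{lem:updown} nor the vanishing increments of Lemma \ref{lem:tj+1tj} implies convergence on its own, and one has to couple them so that the adjacency across a sign change caps the peak of every subsequent excursion. Once $a_k\to p^*$ is established in all three cases, the reduction from the first paragraph promotes the convergence to the full sequence $p(t_j)$, completing the proof.
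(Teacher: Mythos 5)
Your proposal is correct and follows essentially the same route as the paper's proof: reduce to the subsequence $p(t_{2k\pm 1})$ via Lemma \ref{lem:tj+1tj}, use the min-max profile, dispatch the eventually-one-sided cases with Corollaries \ref{cor:V_MIN} and \ref{cor:V_MAX}, and in the infinitely-crossing case combine the within-regime monotonicity of Lemma \ref{lem:updown} with the vanishing increments to cap each excursion at its first index. Your excursion-block phrasing is just a cleaner, more explicit rendering of the paper's ``repeating the same argument'' step after the first crossing.
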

	\begin{proof}
        Based on Lemma \ref{lem:tj+1tj}, it suffices to show $\lim\limits_{k\rightarrow\infty}p(t_{2k+1})= p^*$. We claim that using the min-max velocity profile, we can achieve $\lim\limits_{k\rightarrow\infty}p(t_{2k+1})= p^*$. Let $p(t_3) \leq p^*$, then the min-max algorithm adjusts $V_2$ to $V_{MIN}$. In fact, $V_{j+1}$ is tuned to $V_{MIN}$ as long as $p(t_{2j+1}) \leq p^*$. Now, if for all $j>1$, $p(t_{2j+1}, V_{MIN}) \leq p^*$ then by Corollary \ref{cor:V_MIN}, we have $\lim \limits_{k\rightarrow\infty}p(t_{2k+1}, V_{MIN})= p^*$, in which case we are done. Otherwise, there exists a $k_1$ in which $p(t_{2k_1+1}) \geq p^*$ at which point the algorithm switches to $V_{MAX}$. Similarly, by Corollary \ref{cor:V_MAX}, there exists $k_2 \geq k_1$ such that $p(t_{2k_2+1}) \leq p^*$, and this oscillation repeats infinitely (or anytime it stops we are already converging to $p^{*}$ and we are done). Thus, we may assume the sequence $p(t_j)$, for $j= 1, 2, \dots$ \textbf{crosses} $p^*$ infinitely many times.

		To complete the proof of Lemma \ref{lem:velocityprogfileexits}, we show that for all $\epsilon > 0$, there exists $k_\epsilon$ such that for all $ k > k_\epsilon$, we have $\left| p(t_{2k+1})-p^* \right| < \epsilon$. First, choose $k_1$ such that for all $k \geq k_1$, we have $ \left| p(t_{2k})-p(t_{2(k-1)}) \right| < \frac{\epsilon}{4}$ and $ \left| p(t_{2k+1})-p(t_{2k-1}) \right| < \frac{\epsilon}{4}$ (Lemma \ref{lem:tj+1tj}).

Without loss of generality assume $p(t_{2k-1})<p^*$. Let $k_\epsilon$ be the smallest $k > k_1$ such that $p(t_{2k_\epsilon+1})$ crosses $p^*$, then we know the following
\begin{enumerate}
  \item $p(t_{2k_\epsilon+1})>p^*$ and $p(t_{2k_\epsilon+3})<p(t_{2k_\epsilon+1})$ (Lemma \ref{lem:updown});
  \item $|p(t_{2k_\epsilon+1})-p^*|<\frac{\epsilon}{2}$;
  \item $|p(t_{2k_\epsilon+3})-p(t_{2k_\epsilon+1})|<\frac{\epsilon}{2}$.
\end{enumerate}
Therefore, we conclude $|p(t_{2k_\epsilon+3})-p^{*}|<\epsilon$. Indeed, repeating the same argument from now on, we conclude for all $ k > k_\epsilon$, we have $\left| p(t_{2k+1})-p^* \right| < \epsilon$.
\end{proof}
\begin{proof}[Proof of Theorem \ref{thm:case2}]	To prove Theorem \ref{thm:case2}, we show that the min-max sequence $V_j$ satisfies $\lim_{t\to\infty}p_l(t)=p^*$. It should be noted that we assumed that there exists $\delta_{MIN} > 0$ and $\delta_{MAX} > 0$ such that for all $k$, $0 < \delta_{MIN} \leq \delta_k \leq \delta_{MAX} < \infty $, also there are $\Delta_{MIN} > 0$ and $\Delta_{MAX} > 0$ such that for all $k$, $0 < \Delta_{MIN} \leq \Delta_k \leq \Delta_{MAX} < \infty$.
	
	Here, we define $k(t) =min \left(k: t_{2k}\geq t \right)  $. Also, for $t_{2(k-1)} \leq t \leq t_{2k}$ we define the following:
	\begin{align*}
	a_k=\frac{c(t_{2k})}{t_{2k}-\delta_{MAX}-\Delta_{MAX}},\,
	b_k=\frac{c(t_{2(k-1)})}{t_{2(k-1)}+\delta_{MAX}+\Delta_{MAX}}
	\end{align*}
	By using (\ref{eq:lemma3}), we have
	\begin{align*}
	\lim\limits_{k\rightarrow\infty}a_k=\lim\limits_{k\rightarrow\infty}\frac{c(t_{2k})}{t_{2k}-\delta_{MAX}-\Delta_{MAX}}=\lim\limits_{k\rightarrow\infty}\frac{c(t_{2k})}{t_{2k}}\frac{t_{2k}}{t_{2k}-\delta_{MAX}-\Delta_{MAX}}=p^*.	
	\end{align*}
	Similarly, we can conclude that
	\begin{align*}
	\lim\limits_{k\rightarrow\infty}b_k=\lim\limits_{k\rightarrow\infty}\frac{c(t_{2(k-1)})}{t_{2(k-1)}+\delta_{MAX}+\Delta_{MAX}}=\lim\limits_{k\rightarrow\infty}\frac{c(t_{2(k-1)})}{t_{2(k-1)}}\frac{t_{2(k-1)}}{t_{2(k-1)}+\delta_{MAX}+\Delta_{MAX}}=p^*
	\end{align*}
	Using definition of $k(t)$, we have
	\begin{align*}
	p(t) = \frac{c(t)}{t} \leq \frac{c(t_{2k})}{t} \leq \frac{c(t_{2k}))}{t-\delta_{MAX}-\Delta_{MAX}} = a_k\\
	p(t) = \frac{c(t)}{t} \geq \frac{c(t_{2(k-1)})}{t} \geq \frac{c(t_{2(k-1)}))}{t+\delta_{MAX}+\Delta_{MAX}} = b_k
	\end{align*}
	So, for all $t$, we have $b_k(t) \leq p(t) \leq a_k(t)$. Based on this  we can conclude that:
	\begin{align*}
	\begin{dcases*}
	p(t) \geq b_k(t)\\
	p(t) \leq a_k(t)
	\end{dcases*}\Rightarrow\begin{dcases*}
	\liminf\limits_{t\rightarrow\infty}p(t) \geq \liminf\limits_{t\rightarrow\infty}b_k(t) = p^*\\
	\limsup\limits_{t\rightarrow\infty}p(t) \leq \limsup\limits_{t\rightarrow\infty}a_k(t) = p^*
	\end{dcases*}\Rightarrow \lim\limits_{t\rightarrow\infty}p(t)= p^*
	\end{align*}
\end{proof}



\section{Simulation Results}
\label{sec:result}
In Section \ref{sec:Ideal_case}, we proved that the ideal algorithm provides uniform coverage, in this section, we run simulation for this algorithm to verify our claim. As mentioned before, to investigate the coverage associated to each trajectory, we divide the neighborhood area into small cells and measure the average number of drones over the regions through simulation. We consider $10$ disjoint equal cells within $\frac{5}{8}$ of a circular area with radius $\rho = 5 km$ as shown in Fig~\ref{fig:disjiont_equal_area}. We set the radius of the post office center to $100$, i.e., $\gamma = 100 m$, and the number of houses to $100$, i.e., $N = 100$. We run the simulation with two different number of drones $D = 5$ and $D = 10$.



\begin{figure}[htbp]%
	\centering
	\includegraphics[width=0.4\columnwidth]{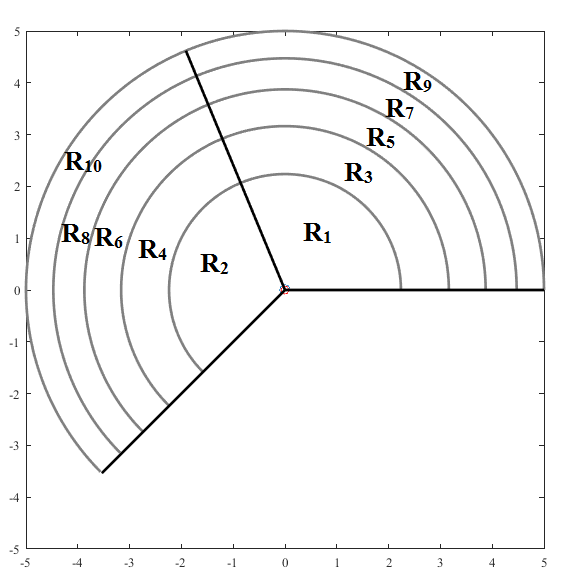}%
	\caption{Circular area with radius 5 km is divided to 10 disjoint regions}%
	\label{fig:disjiont_equal_area}%
\end{figure}

Figure~\ref{fig:uniform_c1} shows the average number of drones flying over each of the ten regions for both simulation and analysis. As can be seen, the simulation and analysis results coincide for both 5 and 10 drones. Also, there is an equal average number of drones over all the regions, which validates our claim that the proposed Algorithm 1 provides uniform coverage.

\begin{figure}[htbp]%
	\centering	
	\includegraphics[width=0.9\columnwidth]{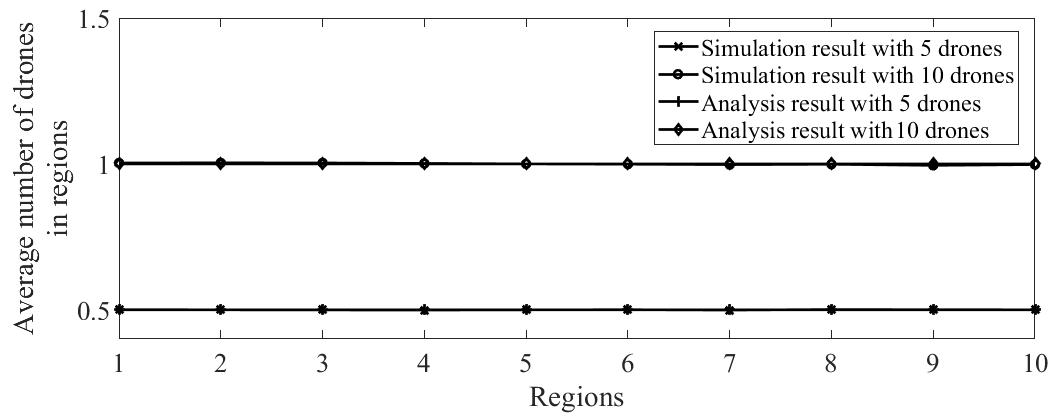}
	\caption{ Average number of the drones over the regions for 5 and 10 drones }
	\label{fig:uniform_c1}%
\end{figure}

In Section \ref{sec:Practical_case}, we proposed Algorithm 2 to deliver the packages and provide the uniform coverage simultaneously which can be applied to any neighborhood area with any distribution of arrival packages and position of houses. We consider two neighborhood areas, University of Massachusetts Amherst and Union Point,which is a smart town near Boston, to verify our claim about uniformity in coverage and investigate the efficiency of our algorithm to deliver the packages. We introduced University of Massachusetts Amherst community in Section \ref{sec:intro}. Figure \ref{fig:umassmap} and \ref{fig:heatmapumass} showed the neighborhood map and the heat-map of average number of drones for the benchmark algorithm, respectively. Figure \ref{fig:umassproposed} shows the heat-map of the average number of drones for the proposed algorithm. In Figure \ref{fig:umassp5drones}, our proposed algorithm is simulated by 5 drones and in Figure \ref{fig:umassp10drones}, our algorithm is simulated by 10 drones. As can be seen, both configurations provide uniform coverage over the neighborhood area. 

\begin{figure}[htbp]%
	\centering
	\subfloat[5 drones are used to simulate the algorithm]{
		\includegraphics[width=0.4\columnwidth]{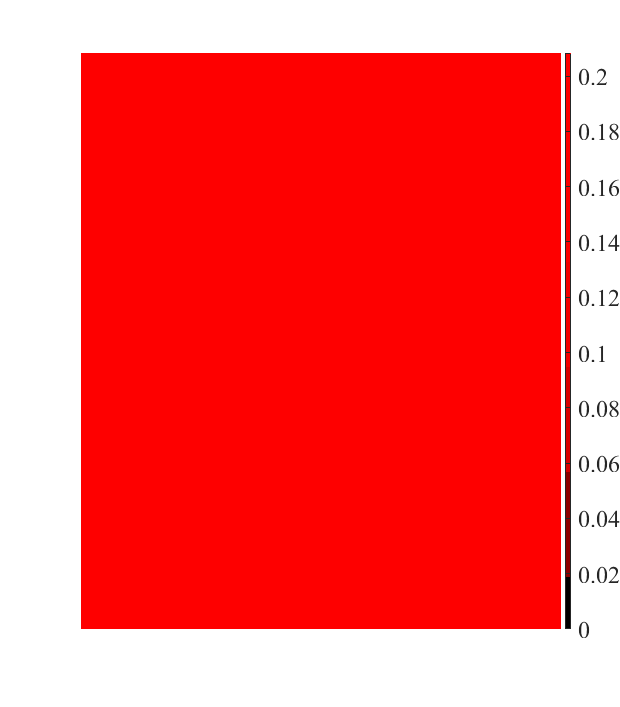}\label{fig:umassp5drones}}\hfill
	\subfloat[10 drones are used to simulate the algorithm]{
		\includegraphics[width=0.4\columnwidth]{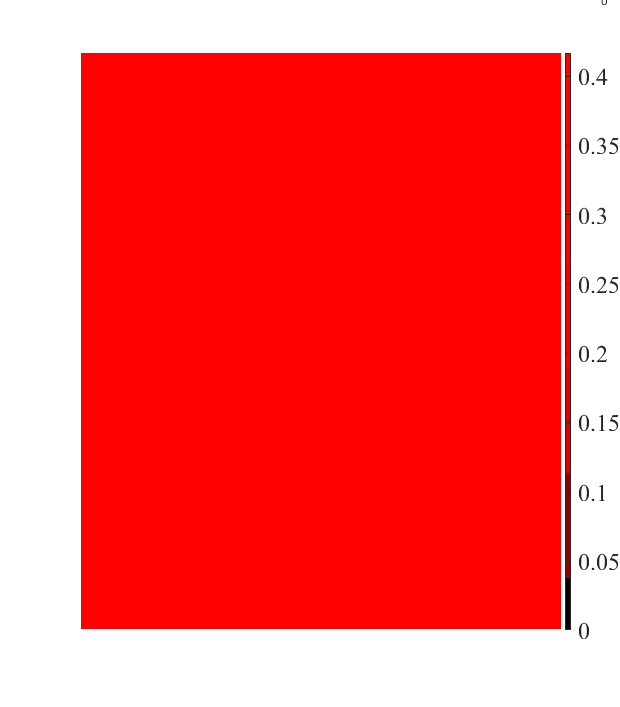}\label{fig:umassp10drones}}
	\caption{Proposed multi-purpose drone algorithm for University of Massachusetts (UMASS) community}%
	\label{fig:umassproposed}%
\end{figure}

As for the Union Point, which has approximately 4000 homes\cite{unionpoint} and total area of 1500 acres (see Fig. \ref{fig:uinionpointmap}), we assume the last-mile delivery office is located in the top-left corner of the figure. We divided the neighborhood community into 24 small cells to investigated the coverage. 10 drones are used to deliver the packages. First we assume the drones fly in straight lines with constant velocity to deliver the packages to  houses. The average number of drones flown over the regions is shown by a heat-map in Fig. \ref{fig:heatmapalg1unionpoint}. Then we assume the drones follow the proposed Algorithm 2 to deliver the packages to houses. The average number of drones over the regions is shown by heat-map in Fig. \ref{fig:heatmapproposedalgunionpoint}. As can be seen, the proposed algorithm provides uniform coverage over the entire neighborhood area.

\begin{figure}[htbp]%
	\centering
	\subfloat[Union point]{
		\includegraphics[width=0.32\columnwidth]{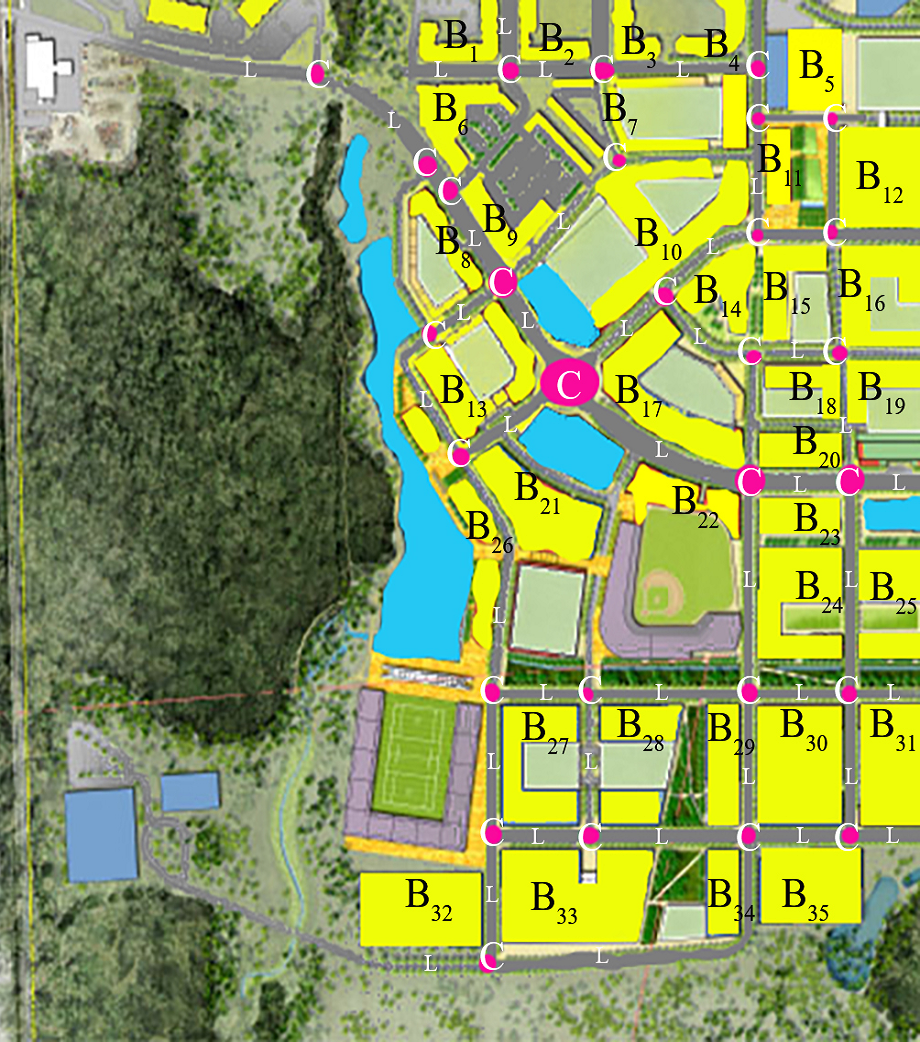}\label{fig:uinionpointmap}}\hfill
	\subfloat[Heat-map of average number of drones for the \textit{fixed-speed-direct-line} algorithm]{
		\includegraphics[width=0.32\columnwidth]{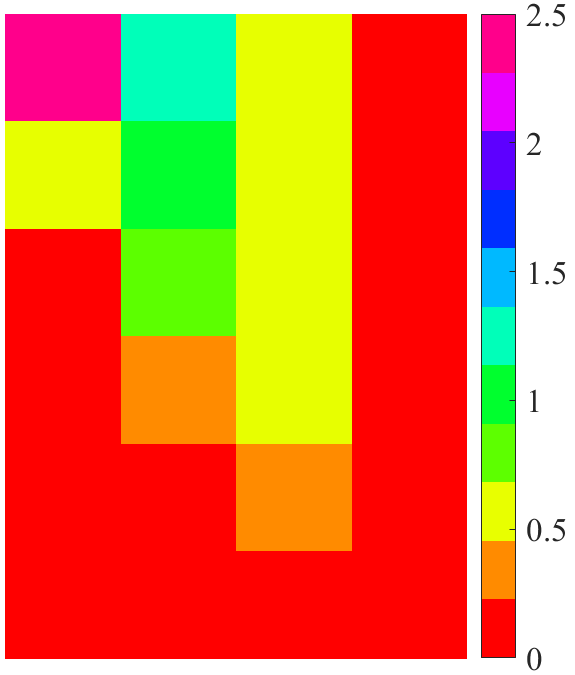}\label{fig:heatmapalg1unionpoint}}\hfill
	\subfloat[Heat-map of average number of drones for the \textit{proposed} algorithm]{
		\includegraphics[width=0.32\columnwidth]{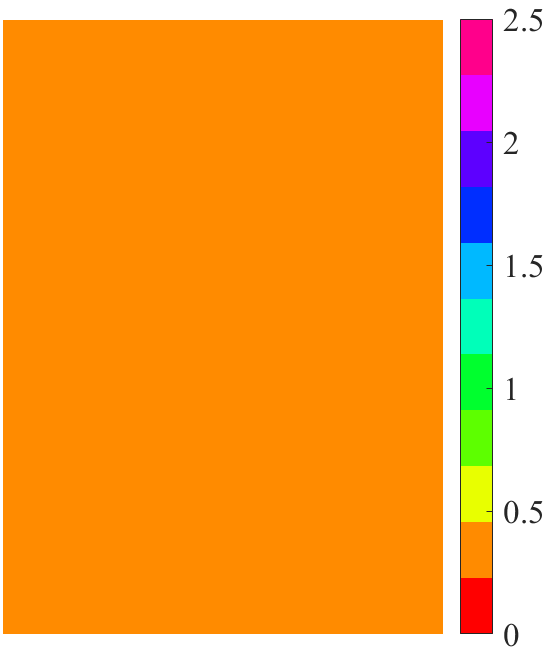}\label{fig:heatmapproposedalgunionpoint}}
	\caption{Multi-purpose drone algorithm for Union point community}%
	\label{fig:unionpointsim}%
\end{figure}

So far, we have shown that the proposed algorithm provides uniform coverage over the neighborhood area. Here, we want to show that this algorithm also provides efficient delivery of packages. To do so, we measured the average delivery time of 1000 packages through simulation and showed the efficiency in Table~\ref{tab:efficiency}. As seen, our proposed algorithm delivers the packages over both communities efficiently. In Union Point, the efficiency slightly decreases because there are some cells without buildings. Figure~\ref{fig:unionpointsim} shows the distribution of package delivery time for the Union Point community. As can be seen, the distribution profiles are of similar nature for the proposed algorithm and the bench mark algorithm while the latter can not provide a uniform coverage. In particular, we are interested in the fraction packages that are delivered later than certain amount of time, e.g., 30 minutes. This value has been reported 
 in Table~\ref{tab:efficiency}. As reported in this table, the fraction of these packages are very small. 

\begin{table}[htbp]
\centering
\caption{ Average time to deliver 1000 packages with 10 drones for second algorithm}
	\begin{tabular}{c|c|c|}
		\cline{2-3}
		& Efficiency & fraction of  packages (average) with delivery time \textgreater 30 mins \\ \hline
		\multicolumn{1}{|c|}{UMASS Community} & 1 & 0.006 \\ \hline
		\multicolumn{1}{|c|}{Union point Community} & 0.87 & 0.012 \\ \hline
	\end{tabular}
\label{tab:efficiency}
\end{table}

\begin{figure}[htbp]%
	\centering
	
		\includegraphics[width=0.6\columnwidth]{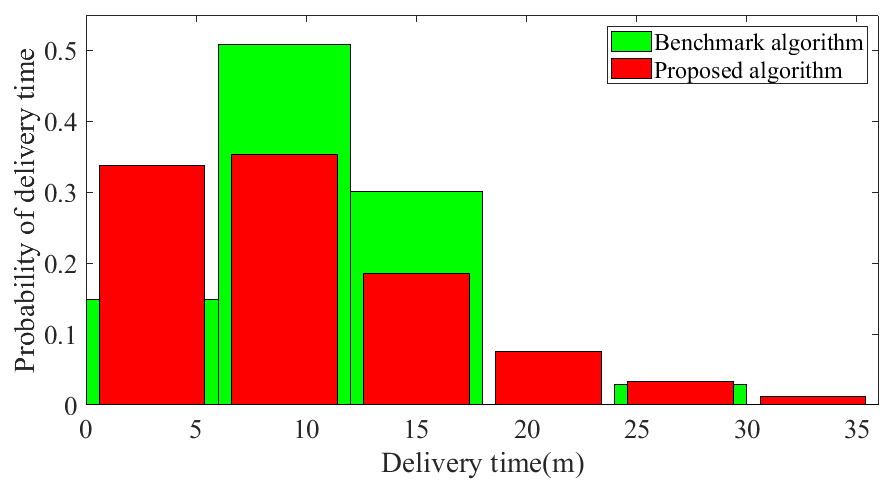}\label{fig:unionpountspmf}
	\caption{Probability of delivery time for Union point community}%
	\label{fig:unionpointsim}%
\end{figure}

\vspace{-2mm}
\section{Conclusion}
\label{sec:conclude}
In this paper, we proposed UAVs that simultaneously perform multiple tasks, uniform-coverage applications (UCAs) and last-mile delivery. We investigated the multi-task UAVs for two scenarios: i) a simplified scenario where the neighborhood area is a circular region, and ii) a practical scenario where the neighborhood area is an arbitrarily-shaped region. For each scenario, we proposed an algorithm for UCA and last-mile delivery. We proved that both algorithms provide a uniform coverage probability for a typical user within the neighborhood area. Through simulation results we verified the uniform coverage and at the same time, we demonstrated that we can still maintain the delivery efficiency compared to the original delivery algorithm.


\ifanonymous
{ }
\else

\fi



\bibliographystyle{IEEEtran}
\bibliography{bib/refs}

\end{document}